\documentclass[journal]{IEEEtran}
\usepackage{fancyhdr,graphicx,amsmath,amssymb, amsfonts}
\usepackage{dsfont}
\usepackage{array}
\usepackage[caption=false,font=normalsize,labelfont=rm,textfont=rm]{subfig}
\usepackage{textcomp}
\usepackage{stfloats}
\usepackage{url}
\usepackage{verbatim}
\usepackage{graphicx}
\usepackage[nocompress]{cite}
\usepackage{xcolor}
\usepackage{amsthm}
\usepackage[ruled,vlined,linesnumbered]{algorithm2e}
\usepackage{nomencl}
\usepackage{algpseudocode}
\DeclareMathOperator*{\argmin}{arg\,min}
\usepackage{makecell}
\usepackage{multicol, multirow}
\usepackage{kotex}

\newtheorem{lemma}{Lemma}

\newtheorem{remark}{Remark}
\usepackage[colorlinks=true, linkcolor=blue, citecolor=blue, urlcolor=blue]{hyperref}
\allowdisplaybreaks

\ifCLASSINFOpdf
\else
\fi
\begin{document}
%
% paper title
% Titles are generally capitalized except for words such as a, an, and, as,
% at, but, by, for, in, nor, of, on, or, the, to and up, which are usually
% not capitalized unless they are the first or last word of the title.
% Linebreaks \\ can be used within to get better formatting as desired.
% Do not put math or special symbols in the title.
\title{Low-Complexity Beamspace Channel Denoiser for mmWave Massive MIMO with Low-Resolution ADCs}
%
%
% author names and IEEE memberships
% note positions of commas and nonbreaking spaces ( ~ ) LaTeX will not break
% a structure at a ~ so this keeps an author's name from being broken across
% two lines.
% use \thanks{} to gain access to the first footnote area
% a separate \thanks must be used for each paragraph as LaTeX2e's \thanks
% was not built to handle multiple paragraphs
%

\author{Hanyoung Park,~\IEEEmembership{Graduate Student Member,~IEEE,}
        Eunho Kim, 
        and~Ji-Woong Choi,~\IEEEmembership{Senior Member,~IEEE}% <-this % stops a space
\thanks{This work was supported by Institute of Information \& communications Technology Planning \& Evaluation (IITP) grant funded by the Korea government (MSIT) (No. RS-2024-00442085, No. RS-2024-00398157). 
\textit{(Corresponding author: Ji-Woong Choi.)}}
\thanks{H. Park is with the Department
of Electrical Engineering and Computer Science, Daegu Gyeongbuk Institute of Science and Technology (DGIST), Daegu 42988,
 South Korea (e-mail: prkhnyng@dgist.ac.kr).}% <-this % stops a space
 \thanks{E. Kim and J.-W. Choi are with the Department
of Electrical Engineering and Computer Science, Daegu Gyeongbuk Institute of Science and Technology (DGIST), Daegu 42988,
 South Korea, and also with Brain Engineering Convergence Research Center (BCC), DGIST, Daegu 42988, South Korea (e-mail: hunho0111@dgist.ac.kr; jwchoi@dgist.ac.kr).}% <-this % stops a space
% \thanks{Manuscript received April 19, 2005; revised August 26, 2015.}
}

\maketitle

\begin{abstract}
In this paper, we propose a low-complexity beamspace channel denoising algorithm for millimeter-wave (mmWave) massive multi-input multi-output (MIMO) systems with low-resolution analog-to-digital converters (ADCs). 
The proposed method exploits the inherent sparsity of mmWave channels in the beamspace domain and formulates the denoising problem as a Bayesian binary hypothesis testing under a Bernoulli-complex Gaussian prior. 
To capture the distortion induced by low-resolution ADCs in a complexity-efficient manner, thermal noise and quantization noise are jointly modeled as a composite noise.
Based on this modeling, a closed-form threshold value and a hard-thresholding-based denoising rule are derived to distinguish signal-dominant and noise-dominant components. 
The resulting algorithm avoids computationally intensive operations such as matrix inversion, iterative optimization, and parameter searching, and achieves near-linear computational complexity with respect to the number of antennas. 
Furthermore, a hardware-efficient very large-scale integration (VLSI) architecture is developed to enable practical deployment of the proposed algorithm, and is implemented on an AMD-Xilinx Kintex UltraScale+ KCU116 FPGA platform. The design incorporates hardware-aware simplifications and an efficient processing structure, leading to significantly lower latency and reduced hardware resource utilization compared to existing hardware implementations, along with sublinear scaling as the number of antennas increases. Extensive simulation results demonstrate that the proposed method achieves performance comparable to computationally intensive existing approaches while significantly reducing computational complexity.
\end{abstract}

% Note that keywords are not normally used for peerreview papers.
\begin{IEEEkeywords}
Low-complexity, channel denoising, mmWave, massive MIMO, low-resolution ADC, quantization noise, Bernoulli-complex Gaussian, binary hypothesis testing, VLSI.
\end{IEEEkeywords}

% For peer review papers, you can put extra information on the cover
% page as needed:
% \ifCLASSOPTIONpeerreview
% \begin{center} \bfseries EDICS Category: 3-BBND \end{center}
% \fi
%
% For peerreview papers, this IEEEtran command inserts a page break and
% creates the second title. It will be ignored for other modes.
\IEEEpeerreviewmaketitle

\section{Introduction}
% The very first letter is a 2 line initial drop letter followed
% by the rest of the first word in caps.
% 
% form to use if the first word consists of a single letter:
% \IEEEPARstart{A}{demo} file is ....
% 
% form to use if you need the single drop letter followed by
% normal text (unknown if ever used by the IEEE):
% \IEEEPARstart{A}{}demo file is ....
% 
% Some journals put the first two words in caps:
% \IEEEPARstart{T}{his demo} file is ....
% 
% Here we have the typical use of a "T" for an initial drop letter
% and "HIS" in caps to complete the first word.

With the development of wireless communication techniques, wireless connections are considered for various advanced services and applications such as virtual/augmented reality (VR/AR), smartglasses, autonomous driving, remote surgery, and so on~\cite{ref:ericsson, ref:6g}. This connectivity requires the transmission of high-volume data with low latency. To this end, abundant bandwidth is mandatory, and millimeter-wave (mmWave) communication systems are regarded as future solutions for next-generation communications~\cite{ref:ericssonmobreport2025q4}. However, mmWave has its own characteristics compared to other wireless communications, such as severe path loss, susceptibility to blockages, and limited diffraction~\cite{ref:mmwavetextbook}. These characteristics lead to a limited number of multipaths, resulting in spatial sparsity and highly directional channel responses~\cite{ref:mmwavework, ref:mmwaveblindsnr}. To overcome these challenges in mmWave propagation environments, massive multiple-input multiple-output (MIMO) architectures are employed, allowing for the compensation of these artifacts through directional beamforming~\cite{ref:mmwavemimo}.

However, adopting massive MIMO raises two challenges for mmWave systems. First, the computational complexity of signal processing is increased with the number of antennas. Since beamforming-based systems heavily depend on channel state information (CSI), and this estimate is utilized in beam selection~\cite{ref:ex1_beamselect}, precoding~\cite{ref:ex2_precoding}, equalization~\cite{ref:ex3_equalization}, scheduling~\cite{ref:ex4_scheduling}, and so on. To utilize the CSI for these applications, the channel estimation must reflect the changes in the dynamic wireless environments. However, as mentioned above, the channel changes at a fast rate since mmWave is sensitive to blockages or nearby obstacles due to its lack of multipaths~\cite{ref:pathloss}. Additionally, its channel coherence time is shorter because of its high frequency~\cite{ref:mmwavesp}. Accordingly, channel estimation must be performed with low latency to ensure the validity of the channel estimates. However, since the computational complexity of channel estimation increases with the number of antennas, the development of a low-latency channel estimation algorithm under overhead constraints remains an important challenge in mmWave MIMO systems.
In addition, the number of analog-to-digital converters (ADCs) in massive MIMO increases in proportion to the number of antennas. This can be a burden from the perspective of power consumption and costs~\cite{ref:mmwavesp}. Thus, the adoption of low-resolution ADCs is considered to overcome this challenge~\cite{ref:lowresadc, ref:lowresadc2}. However, the low-resolution of ADCs means a coarse quantization, so it is required to suppress the influence of quantization noise on the channel estimation.

\subsection{Prior Works}
Low-resolution ADCs offer significant advantages in terms of power consumption, chip area, and hardware cost. 
However, the severe quantization noise introduced by coarse quantization leads to substantial signal distortion, which has motivated extensive research efforts to mitigate its impact. One line of work explicitly models the nonlinear quantization effect to address this issue~\cite{ref:nonlinear1,ref:nonlinear2,ref:nonlinear3,ref:nonlinear4,ref:nonlinear5,ref:nonlinear6,ref:nonlinear7,ref:glqvbce,ref:nonlinear8}. 
The work in \cite{ref:nonlinear1} proposes a maximum likelihood (ML)-based channel estimation and detection scheme for 1-bit ADC MIMO systems. In \cite{ref:nonlinear2}, a likelihood-based signal recovery method applicable not only to 1-bit but also to general low-resolution ADCs is presented. Liu \textit{et al.} further develop an ML-based channel estimation framework that accounts for nonlinear quantization effects~\cite{ref:nonlinear3}. 
Similarly, the work in \cite{ref:nonlinear4} proposes a channel estimation method incorporating nonlinear quantization, while the work in \cite{ref:nonlinear5} combines nonlinear quantization modeling with soft symbol decoding for improved performance. 
In addition, the work in \cite{ref:nonlinear6} introduces a quantization-aware joint channel estimation and decoding scheme applicable to frequency-selective channels. 
Compressive sensing (CS)-based approaches that incorporate nonlinear distortion are also investigated in \cite{ref:nonlinear7, ref:glqvbce}, and their extensions to wideband systems are presented in \cite{ref:nonlinear8}. Despite their performance gains, these nonlinear modeling approaches often lead to mathematically intractable formulations and incur substantial computational complexity, which becomes particularly prohibitive in massive MIMO systems with a large number of antennas.

Since such nonlinear modeling can be somewhat complex in massive MIMO systems, various approaches based on the additive quantization noise model (AQNM)~\cite{ref:aqnm}, derived from Bussgang's decomposition\cite{ref:bussgang}, have been proposed. This model approximates the nonlinear, signal-dependent quantization noise as an additive form and has been widely adopted as an underlying assumption in many low-resolution ADC-based system studies~\cite{ref:aqnm_app1, ref:aqnm_app2, ref:aqnm_app3, ref:aqnm_app4}. 
The works in \cite{ref:aqnm0, ref:aqnm_extend, ref:aqnm1} proposed minimum mean-squared error (MMSE)-based channel estimation methods using Bussgang's decomposition. 
The work in \cite{ref:aqnm2} presented an mmWave channel estimation method that combines a CS-based approach with AQNM. 
Rao \textit{et al.} proposed a channel estimation method that mitigates quantization noise in sigma-delta ADC systems~\cite{ref:aqnm3}. 
The work in \cite{ref:sand} proposed a channel denoising algorithm that combines Stein's unbiased risk estimate (SURE) with Bussgang's decomposition. Srinivas \textit{et al.} proposed a channel estimation method that applies sparse Bayesian learning to Bussgang's decomposition~\cite{ref:aqnm4}. 
Ito \textit{et al.} proposed a joint channel estimation and data detection method based on expectation propagation (EP), where nonlinear distortion is approximated as Gaussian messages~\cite{ref:linear1}. Fesl \textit{et al.} proposed a linear MMSE estimator for 1-bit quantized systems using a Gaussian mixture model (GMM)~\cite{ref:aqnm5}.

Meanwhile, recent advances in deep learning have also been widely applied to the low-resolution ADC channel estimation problem~\cite{ref:sl1,ref:sl2,ref:sl3,ref:sl4,ref:sl5,ref:sl6,ref:sl7,ref:sl8,ref:sl9,ref:sl10,ref:sl11,ref:sl12,ref:ssl1,ref:ul1,ref:ldgec}. 
Gao \textit{et al.} proposed a deep neural network (DNN)-based channel estimation method for mixed-resolution ADC systems~\cite{ref:sl1}. The work in \cite{ref:sl2} interpreted 1-bit channel estimation as a deep learning-based mapping problem. The work in \cite{ref:sl3} proposed a channel estimation method that learns the mapping between signals and channels using a generative adversarial network (GAN). The work in \cite{ref:sl4} presented a convolutional neural network (CNN)-based channel estimation method for 1-bit quantization systems. Balevi \textit{et al.} proposed a GAN-based unsupervised method that utilizes approximate message passing (AMP) to enable channel estimation in high-dimensional settings with 1-bit ADCs~\cite{ref:sl5}. The work in \cite{ref:sl6} proposed a model-driven approach based on AMP. The work in \cite{ref:sl7} proposed a method that integrates a generative latent model with Bussgang linearization. The work in \cite{ref:sl8} proposed a channel estimation algorithm for 1-bit massive MIMO systems using a conditional GAN (CGAN), which was further refined in \cite{ref:sl9, ref:sl10} for mixed-resolution ADC systems and adaptive quantization schemes. The work in \cite{ref:sl11} proposed a 1-bit MIMO estimation algorithm that leverages the advantage of GAN-based inverse quantization mapping while mitigating training instability. Helmy \textit{et al.} proposed a channel estimation method that performs denoising by blindly estimating quantization noise using long short-term memory (LSTM) and attention mechanisms~\cite{ref:sl12}. Liu \textit{et al.} proposed a method that combines AQNM with self-supervised learning~\cite{ref:ssl1}. The work in \cite{ref:ul1} proposed a channel estimation method based on SURE and diffusion models. He \textit{et al.} formulated the wideband mmWave channel estimation problem as a two-dimensional CS problem and proposed an unsupervised channel denoising method using SURE and CNN~\cite{ref:ldgec}.
Although these approaches achieve strong performance, they often involve high algorithmic complexity or computationally burdensome structures. This issue becomes more pronounced in scenarios with a large number of antennas, and may limit real-time applicability, particularly in rapidly varying mmWave channels. Moreover, these methods are primarily designed as software-based signal processing algorithms without considering hardware implementation aspects.

Hardware implementation of MIMO channel estimation methods has also been extensively studied~\cite{ref:hw1,ref:hw2,ref:hw3,ref:hw4,ref:beaches,ref:liu, ref:hw5, ref:hw6, ref:hw7}. The works in \cite{ref:hw1, ref:hw2} proposed joint channel estimation and data detection schemes and implemented the corresponding hardware architectures. Damjancevic \textit{et al.} proposed a hardware-aware architecture that efficiently performs LS-based channel estimation~\cite{ref:hw3}. Chundi \textit{et al.} presented a channel estimation algorithm based on model-driven deep learning along with its hardware implementation~\cite{ref:hw4}. The work in \cite{ref:beaches} implemented hardware for mmWave beamspace channel estimation. 
Liu \textit{et al.} proposed a complexity-efficient channel estimator for sparse MIMO channels~\cite{ref:liu}.
Fu \textit{et al.} proposed an architecture capable of performing various MIMO signal processing operations, including channel estimation, with high energy efficiency~\cite{ref:hw5}. The work in \cite{ref:hw6} proposed an integrated design for synchronization and channel estimation in sub-THz band MIMO systems. The work in \cite{ref:hw7} designed a processor for efficient deep learning-based channel estimation. However, these approaches do not consider low-resolution ADCs for massive MIMO systems.

\subsection{Contributions}
Fast and computationally efficient channel estimation algorithms are essential, since mmWave channels exhibit rapid fluctuations~\cite{ref:mmwavetextbook}. However, existing approaches primarily focus on accuracy rather than real-time implementation, and often rely on iterative procedures or deep learning-based inference, which require significant computational resources. 
In addition, channel estimation methods for low-resolution ADC systems have rarely considered hardware implementation. Therefore, in this paper, we propose a low-complexity channel denoising algorithm for mmWave massive MIMO systems with low-resolution ADCs. The main contributions of this paper are summarized as follows:
\begin{itemize}
    \item We propose a novel channel denoising algorithm that exploits the sparsity of mmWave channels in the beamspace representation and statistical inference. The beamspace channel denoising problem is modeled as an element-wise binary hypothesis testing problem, where the noise is modeled as a composite noise consisting of thermal noise and quantization noise. Each beamspace element is modeled as either containing a signal or containing only noise, and if it is classified as noise-only, it is removed via a hard-thresholding-based denoising operation.
    \item The proposed method does not rely on computationally intensive operations such as matrix inversion or parameter searching, and achieves near-linear computational complexity in the number of antennas. Despite reducing computational complexity, it achieves performance comparable to existing computationally intensive methods.
    \item We propose a very large-scale integration (VLSI) architecture for the algorithm and present corresponding field-programmable gate array (FPGA) implementation results. The proposed method is refined with hardware-aware simplifications and a hardware-efficient structure, and the implementation results demonstrate lower resource utilization and shorter latency compared to conventional channel estimation methods.
\end{itemize}

\subsection{Paper Outline}
The rest of paper is organized as follows. Section~\ref{sec:systemmodel} introduces the system model. Section~\ref{sec:proposed} describes the low-complexity beamspace channel denoiser algorithm. Section~\ref{sec:simresults} shows the simulation results and performance evaluation.
Section~\ref{sec:hardware} describes the proposed VLSI design, its implementation, and corresponding analyses. Section~\ref{sec:conclusion} concludes this paper. 

\subsubsection*{Notation}
Matrices and vectors are denoted by boldface uppercase and lowercase letters, respectively.
For matrix $\mathbf{A}$, its $i$-th column vector is $\mathbf{a}_i$, and its $j$-th element is $A_{ij}$.
For vector $\mathrm{a}$, its $i$-th element is $a_i$.
$(\cdot)^T$ and $(\cdot)^H$ denote transpose and conjugate transpose, respectively.
$\mathbf{I}_N$ and $\mathbf{0}^{N\times M}$ mean $N\times N$ identity matrix and $N\times M$ zero matrix, respectively.
$\|\cdot\|$ represents the $\ell_2$ norm, and $\|\mathbf{a}\|_0$ means $\ell_0$ norms, i.e., the number of nonzero entries in $\mathbf{a}$.
$|\mathbf{a}|$ and $|\mathbf{a}|^2$ mean the element-wise absolute-valued vector and element-wise squared absolute-value vector for $\mathbf{a}$, respectively.
$\mathrm{Exp}(\lambda)$ is the exponential distribution with rate parameter $\lambda$, and its probability density function (PDF) is
\begin{equation}
    f_\mathrm{Exp}(x, \lambda) = \begin{cases}
        \lambda e^{-\lambda x} & x\geq 0, \\
        0 & x<0.
    \end{cases}
\end{equation}
$\underset{\mathcal{H}_0}{\overset{\mathcal{H}_1}{\gtrless}}$ represents a binary decision operator where the decision is $\mathcal{H}_0$ if the right-hand side is greater than the left-hand side, and the decision is $\mathcal{H}_1$ otherwise.
$\mathrm{diag}(\mathbf{A})$ denotes a diagonal matrix formed by the diagonal elements of the matrix $\mathbf{A}$.

\section{System Model}\label{sec:systemmodel}
We consider a single-cell massive mmWave MU-MIMO uplink system. In our model, a base station (BS) has an $M$-element uniform linear array (ULA) and serves $K$ single-antenna user equipments (UEs) via a frequency-flat channel. 
The received baseband signal vector $\mathbf{y}\in\mathbb{C}^M$ is determined as
\begin{equation}\label{eqn:yHxn}
    \mathbf{y}=\mathcal{Q}(\mathbf{Hx}+\mathbf{n}),
\end{equation}
where $\mathcal{Q}(\cdot)$ is the quantization function, $\mathbf{H}\in\mathbb{C}^{M\times K}$ is the uplink MIMO channel matrix, $\mathbf{x}\in\mathbb{C}^K$ is the transmitted signal vector, and $\mathbf{n}\sim \mathcal{CN}(0, N_0\mathbf{I}_M)$ is the additive white Gaussian noise (AWGN) vector.
The channel vector of UE $k$ is determined as
\begin{equation}\label{eqn:channelvector}
    \mathbf{h}_k = \sum_{\ell=1}^L g_{k,\ell}\mathbf{a}(\phi_{k,\ell}),
\end{equation}
where $L$ is the number of multipaths including a potential line-of-sight (LoS) propagation path, $g_{k,\ell}$ is the complex channel gain of $\ell$-th dominant path, $\phi_{k,\ell}$ is the spatial frequency which corresponds to angle-of-arrival (AoA), and $\mathbf{a}(\phi_{k,\ell})$ is steering vector. Note that $L$ is small, because the number of multipaths is limited in mmWave propagation environments due to the severe pathloss and the lack of scattering\textcolor{red}{~\cite{ref:mmwavetextbook}.}
The steering vector for arbitrary spatial frequency $\phi$ can be presented as
\begin{equation}
    \mathbf{a}(\phi) = [1, e^{-j2\pi\phi}, \cdots, e^{-j(M-1)\pi\phi}]^T.
\end{equation}
For notational simplicity, the user index is omitted in the following discussion.
Also, without loss of generality, we assume that the power of the transmitted symbols and the pilots, which are known signals for channel estimation, is 1.
However, since the modeled uplink signal in \eqref{eqn:yHxn} has a nonlinear function $\mathcal{Q}(\cdot)$, its representation can be complicated.
Thus, to simplify quantization distortion into linear form, AQNM\cite{ref:aqnm}, which is derived from Bussgang's decomposition\cite{ref:bussgang}, can be utilized.
According to this modeling, the received uplink signal can be modeled as\cite{ref:bussgang_mimo}
\begin{equation}
    \mathbf{y} = \mathcal{Q}(\mathbf{Hx}+\mathbf{n}) \approx \alpha (\mathbf{Hx+n}) + \mathbf{n}_\mathrm{q},
\end{equation}
where $\mathbf{n}_\mathrm{q}\in\mathbb{C}^M$ is the approximated quantization noise and $\alpha = \argmin_{\alpha^\prime \in \mathbb{C}} \mathbb{E}[\|\mathbf{y}-\mathbf{y}_\textnormal{orig}\|^2]$ is the quantization gain with the unquantized signal $\mathbf{y}_\textnormal{orig}$, which makes the approximated quantization noise be least correlated.
Accordingly, AQNM approximates the quantization noise vector as a zero-mean complex Gaussian noise with covariance matrix $\mathbf{R}_{\mathbf{n}_\mathrm{q}}$, which is given by\cite{ref:bussgang_mimo}
\begin{equation}
\begin{aligned}
    \mathbf{R}_{\mathbf{n}_\mathrm{q}} & = \alpha(1-\alpha) \mathrm{diag} ((\mathbf{Hx}+\mathbf{n})(\mathbf{Hx}+\mathbf{n})^H) \\
    & = \alpha(1-\alpha)\mathrm{diag}(\mathbf{Hxx}^H\mathbf{H}^H + \mathbf{nn}^H) \\
    & = \alpha(1-\alpha)\mathrm{diag}(\mathbf{Hxx}^H\mathbf{H}^H + N_0\mathbf{I}_M).
\end{aligned}
\end{equation}
Based on this assumption, the channel is observed with dedicated training phases for each UE, and the noisy observation is given by
\begin{equation}
    \mathbf{h}^\prime = \mathcal{Q}(\mathbf{h}+\mathbf{e}) \approx \alpha(\mathbf{h}+\mathbf{e}) + \mathbf{e}_\mathrm{q},
\end{equation}
where $\mathbf{e}$ is the channel estimation error due to the AWGN and $\mathbf{e}_\mathrm{q}$ is the channel estimation error from quantization noise. Note that $\mathbf{e}\sim\mathcal{CN}(0, E_0\mathbf{I}_M)$ because $\mathbf{n}$ is AWGN. Since we assumed the power of transmitted signal is 1, the error variance $E_0$ is equivalent to $N_0$. Here, based on the AQNM, the quantized signal can be simplified as
\begin{equation}
    \mathbf{h}^\prime \approx \alpha(\mathbf{h}+\mathbf{e}) + \mathbf{e}_\mathrm{q} = \alpha \mathbf{h} + \alpha\mathbf{e} + \mathbf{e}_\mathrm{q} = \alpha\mathbf{h} + \mathbf{d},
\end{equation}
where $\mathbf{d}=\alpha\mathbf{e}+\mathbf{e}_\mathrm{q}$ is the composite noise. Since $\mathbf{e}$ and $\mathbf{e}_\mathrm{q}$ is zero-mean complex Gaussian, the composite noise is also zero-mean complex Gaussian. The covariance matrix of $\mathbf{d}$ is given by
\begin{equation}
\begin{aligned}
    \mathbf{R}_\mathbf{d} & = \alpha^2\mathbf{ee}^H + \mathbf{e}_\mathrm{q}\mathbf{e}_\mathrm{q}^H \\&
    = \alpha^2E_0\mathbf{I}_M + \alpha(1-\alpha)\left(\frac{\|\mathbf{h}\|^2}{M}+E_0\right)\mathbf{I}_M
    % \\& = \left(\alpha^2\frac{\|\mathbf{h}\|^2}{M} + (\alpha^2 + 1) E_0\right)\mathbf{I}_M
    \\& = \alpha E_0 \mathbf{I}_M + \alpha(1-\alpha)\left(\frac{\|\mathbf{h}\|^2}{M}\right)\mathbf{I}_M
    \\& = D_0\mathbf{I}_M,
\end{aligned}
\end{equation}
where $D_0=\alpha E_0+\alpha(1-\alpha)({\|\mathbf{h}\|^2}/{M})$ is the composite noise variance.

For mmWave baseband processing, beamspace transformation is often employed to utilize its spatial sparsity which comes from its lack of multipaths. By this transformation, the noisy observation in the beamspace domain $\bar{\mathbf{h}}^\prime$ can be expressed as
\begin{equation}\label{eqn:beamspace}
   \bar{\mathbf{h}}^\prime=\mathbf{F\mathbf{h}}^\prime=\mathbf{F}(\alpha \mathbf{h}+\mathbf{d})
   =\alpha\mathbf{Fh}+\mathbf{Fd}=\alpha\bar{\mathbf{h}}+\bar{\mathbf{d}},
\end{equation}
where $\mathbf{F}\in\mathbb{C}^{M\times M}$ is the normalized discrete Fourier transform (DFT) matrix which implies $\mathbf{F}^H=\mathbf{F}^{-1}$, $\bar{\mathbf{h}}$ is the noiseless channel and $\bar{\mathbf{d}}$ is the estimation error induced by composite noise in beamspace representation.
Since $\mathbf{F}$ is unitary linear transformation, $\|\bar{\mathbf{h}}\|=\|\bar{\mathbf{h}}^\prime\|$, and $\bar{\mathbf{d}}$ is still zero-mean complex Gaussian.
Here, the spatial sparsity of the mmWave propagation environment is reflected in the sparsity of $\bar{\mathbf{h}}$.
Also, the Gaussianity approximation of the quantization noise holds more accurately in the beamspace domain, as shown in Fig.~\ref{fig:qnoisedistribution}, since the quantization noise is directionless.
\begin{figure}
    \centering
    \subfloat[]{\includegraphics[width=0.495\columnwidth]{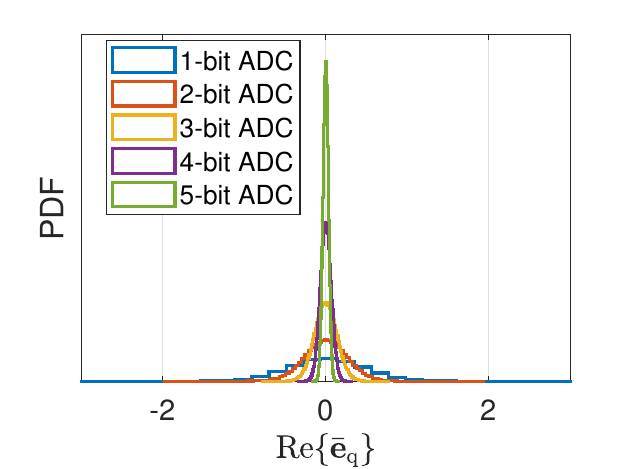}}
    \subfloat[]{\includegraphics[width=0.495\columnwidth]{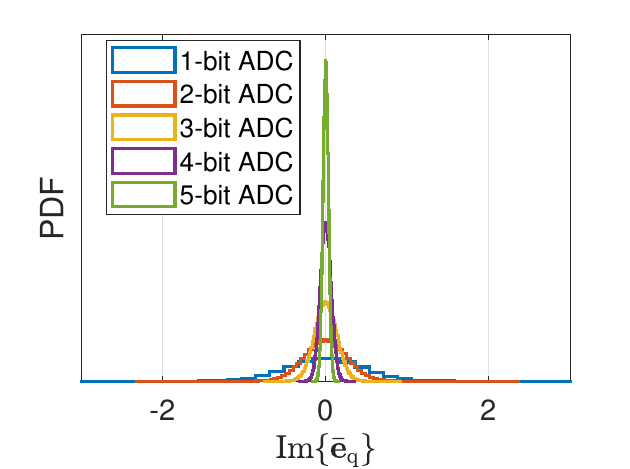}}
    \caption{Normalized histogram of the quantization noise in beamspace domain. (a) real (b) imaginary.}
    \label{fig:qnoisedistribution}
\end{figure}

\section{Proposed Method}\label{sec:proposed}
In this section, we propose a low-complexity beamspace channel denoising algorithm in the presence of quantization noise induced by low-resolution ADCs. 
First, we present a Bernoulli-complex Gaussian-based sparse channel simplification along with the corresponding problem formulation. Subsequently, we introduce auxiliary estimators required for the proposed framework. Finally, we develop a channel denoising algorithm based on a hypothesis testing-driven sparse signal denoising framework.
\subsection{Problem Formulation with Binary Hypothesis Testing}
As discussed in \eqref{eqn:channelvector} and \eqref{eqn:beamspace}, the beamspace channel vector $\bar{\mathbf{h}}$ is sparse. Therefore, recovering $\bar{\mathbf{h}}$ from $\bar{\mathbf{h}}^\prime$ can be viewed as a sparse signal denoising problem. 
To this end, we adopt a Bernoulli-complex Gaussian prior, which is broadly used in sparse signal denoising frameworks to simplify the problem and has also been widely adopted in prior works~\cite{ref:binaryht, ref:aqnm2, ref:bg1, ref:bg2, ref:mad}. 
Based on this simplification, the PDF of the $m$-th beamspace channel vector element $\bar{h}_m$ can be expressed as~\cite{ref:mad}
\begin{equation}\label{eqn:pdf_noiseless}
    f_{\bar{\mathbf{h}}}(\bar{h}_m) \triangleq q\frac{1}{\pi P_{\bar{h}}/q}
    \exp\left( - \frac{|\bar{h}_m|^2}{P_{\bar{h}}/q} \right) 
    + (1-q)\delta(\bar{h}_m),
\end{equation}
where $P_{\bar{h}}=\|\bar{\mathbf{h}}\|^2/M$ is average channel power, $q=\|\bar{\mathbf{h}}\|_0/M$ is the activity rate, and $\delta(\cdot)$ is the Dirac-delta distribution.
This indicates that the beamspace channel element $\bar{h}_m$ follows a Gaussian distribution with probability $q$, and takes the value zero with probability $(1-q)$.
Based on the equations \eqref{eqn:beamspace} and \eqref{eqn:pdf_noiseless}, the PDF of the observed noisy channel vector can be presented as~\cite{ref:binaryht}
\begin{equation}\label{eqn:pdf_noisy}
\begin{aligned}
    f_{\bar{\mathbf{h}}^\prime}(\bar{h}_m^\prime) & 
    \triangleq
    q\frac{1}{\pi(D_0+\alpha^2P_{\bar{h}}/q)} \exp \left( -\frac{|\bar{h}^\prime_{m}|^2}{D_0+\alpha^2P_{\bar{h}}/q} \right)
    \\& \quad + (1-q)\frac{1}{\pi D_0} \exp \left(-\frac{\alpha^2|\bar{h}_{m}^\prime |^2}{D_0}\right).
\end{aligned}
\end{equation}
This PDF characterizes the noisy observation by assigning each beam a probability $q$ of containing a signal component and a probability $(1-q)$ of being noise-only.
This can be rewritten as the following hypotheses:
\begin{equation}\label{eqn:hypotheses}
    \bar{h}_m^\prime = \begin{cases}
        \bar{d}_m, & \mathcal{H}_0, \\
        \alpha\bar{h}_m+\bar{d}_m, & \mathcal{H}_1, \\
    \end{cases}
\end{equation}
where $\mathcal{H}_0$ corresponds to a noise-only element, and $\mathcal{H}_1$ corresponds to an element containing both signal and noise. Under the adopted modeling, the prior probabilities of the hypotheses are equivalent to
\begin{equation}
\begin{aligned}
    &p(\mathcal{H}_0)=1-q,\\
    &p(\mathcal{H}_1)=q.
\end{aligned}
\end{equation}
These priors are derived from the Bernoulli-complex Gaussian sparsity model for the beamspace channel, and $p(\mathcal{H}_1)$ can be interpreted as the fraction of active beam elements.
Due to the inherent sparsity of mmWave channels, $q$ is typically small in practice, which justifies an asymmetric prior favoring $\mathcal{H}_0$. Based on this hypotheses, Bayesian hypothesis testing can be organized as
\begin{equation}\label{eqn:testing}
    \frac{p(\bar{\mathbf{h}}^\prime | \mathcal{H}_1)}{p(\bar{\mathbf{h}}^\prime |  \mathcal{H}_0)}
\mathop{\gtrless}_{\mathcal{H}_0}^{\mathcal{H}_1}
\frac{p(\mathcal{H}_0) C_{10}}{p(\mathcal{H}_1) C_{01}},
\end{equation}
where $C_{ij}$ denotes the penalty cost incurred when hypothesis $\mathcal{H}_i$ is selected while $\mathcal{H}_j$ is the true state.
Therefore, these costs reflect the penalties associated with misdetection (deciding $\mathcal{H}_0$ under $\mathcal{H}_1$) and false alarm (deciding $\mathcal{H}_1$ under $\mathcal{H}_0$). The ratio $C = C_{10}/C_{01}$ represents the relative penalty between the two types of errors. In particular, $C > 1$ indicates that false alarms are penalized more heavily than misdetections, and a larger value of $C$ corresponds to a stronger intolerance toward false alarms. From a denoising perspective, this implies that the probability of misclassifying noise as signal decreases, while the likelihood of treating weak signal components as noise increases. In contrast, a smaller value of $C$ leads to more weak signal components being classified as signal, at the expense of also increasing the probability of misclassifying noise as signal.
Using the PDF in \eqref{eqn:pdf_noisy} and hypotheses in \eqref{eqn:hypotheses}, the testing in \eqref{eqn:testing} can be organized as
\begin{equation}\label{eqn:testing_orig}
    \dfrac{\frac{1}{\pi(\alpha^2P_{\bar{h}}/p(\mathcal{H}_1)+D_0)} \exp(-\frac{|\bar{h}^\prime_m|^2}{\alpha^2P_{\bar{h}}/p(\mathcal{H}_1)+D_0})}
    {\frac{1}{\pi D_0}\exp(-\frac{|\bar{h}_m^\prime|^2}{D_0})}
    \mathop{\gtrless}_{\mathcal{H}_0}^{\mathcal{H}_1}
    \frac{p(\mathcal{H}_0)}{p(\mathcal{H}_1)}C.
\end{equation}
This testing can be reformulated in terms of $|\bar{h}_m^\prime|^2$ as follows:
\begin{equation}\label{eqn:threshold}
    |\bar{h}_m^\prime|^2 \mathop{\gtrless}_{\mathcal{H}_0}^{\mathcal{H}_1}
    D_0 \left(1+\frac{p(\mathcal{H}_1)}{\mathrm{SDNR}}\right)\ln \left( \left(1 + \frac{\mathrm{SDNR}}{p(\mathcal{H}_1)}\right) \frac{1-p(\mathcal{H}_1)}{p(\mathcal{H}_1)}C  \right),
\end{equation}
where $\mathrm{SDNR}$ is the signal-to-distortion-plus-noise power ratio (SDNR), which is defined as
\begin{equation}
    \mathrm{SDNR} = \frac{\alpha^2P_{\bar{h}}}{D_0}.
\end{equation}
For the full derivation, please refer to Appendix~\ref{app:testderivation}.
This threshold increases with the composite noise variance and the signal sparsity, and tends to classify elements that are more likely to be inactive as noise.
\begin{remark}
    Bernoulli-complex Gaussian modeling does not strictly represent the characteristics of mmWave channel propagation; rather, it is introduced as a simplified representation to reformulate the channel denoising problem as a sparse signal denoising task. Furthermore, although quantization noise is not inherently Gaussian, it can be roughly approximated as Gaussian in the beamspace domain due to its non-directional nature. Based on these considerations, the adopted problem formulation serves as a surrogate model that captures both the mmWave channel and nonlinear quantization noise, while significantly reducing computational complexity by simplifying the denoising problem.
\end{remark}

\subsection{Auxiliary Estimators}
The above testing procedure has a limitation in that it requires several parameters that are not available a priori without channel estimation, such as $D_0$, the SDNR, and $p(\mathcal{H}_1)$. 
To address this issue, we employ blind estimators that enable parameter estimation without relying on pilot signals and channel estimates. To estimate the composite noise variance $D_0$, we adopt a robust statistical approach. Since the signal components are sparse, most entries are close to zero. Consequently, from the perspective of the noisy observations, the majority of samples follow a dominant distribution corresponding to the composite noise, while a small subset of samples that contain signal components with large magnitudes follow a different distribution and can be regarded as statistical outliers. Therefore, the problem of estimating the composite noise variance can be interpreted as a robust variance estimation problem in the presence of outliers. Based on this interpretation, we propose a blind estimator for the composite noise power.
Since the composite noise $\bar{\mathbf{d}}$ follows a complex Gaussian distribution, the element-wise squared magnitude vector $|\bar{\mathbf{d}}|^2$ follows an exponential distribution, i.e., $|\bar{\mathbf{d}}|^2 \sim \mathrm{Exp}(D_0)$. 
From the characteristic of the exponential distribution, the median of this distribution is given by $D_0 \ln 2$. 
By matching the scale with respect to $D_0$, an initial estimate of the variance can be derived as~\cite{ref:mad}
\begin{equation} 
    \widehat{D}_0^{(0)} = \frac{\mathrm{median}(|\bar{\mathbf{h}}^\prime|)}{\ln 2}, 
\end{equation} 
where $\mathrm{median}(\cdot)$ is median value of the vector.
However, this estimate is suitable only under the assumption of strong sparsity, whereas beamspace channels may not be sufficiently sparse for the median absolute deviation (MAD) estimator to operate reliably. 
Furthermore, it has been observed that, in such environments, the estimator exhibits a positive bias~\cite{ref:mad}.
To refine the estimate, we employ a truncated mean that suppresses the influence of outliers. In robust statistics, the widely used Huber-M estimator~\cite{ref:huber} achieves robust variance estimation by reducing the contribution of outliers. Specifically, elements identified as outliers are downweighted and contribute linearly rather than quadratically in the variance computation, while inlier elements are incorporated without modification.
However, in the proposed method, a more aggressive strategy is adopted to mitigate the bias induced by signal components. 
Specifically, elements identified as outliers are completely truncated to zero, and the resulting estimate is refined through a small number of iterations. 
This procedure can be formulated as follows. 
At the $t$-th iteration, outliers are identified based on a confidence interval derived from the noise variance estimate obtained in the previous step. 
The corresponding threshold is given by
\begin{equation}
    \tau^{(t)} = c\widehat{D}_0^{(t-1)},
\end{equation}
where $c$ is a parameter determined by the confidence level, implying that the probability that a noise-only element is smaller than $\tau^{(t)}$ is $(1 - e^{-c})$. 
Accordingly, the set of noise-dominated elements is defined as
\begin{equation}\label{eqn:setofnoise}
    \mathcal{S}^{(t)} = \{m: |\bar{h}^\prime_m|^2 \leq \tau^{(t)}\}.
\end{equation}
If the number of retained samples is too small, the resulting average may become unreliable. 
Therefore, when the cardinality satisfies $|\mathcal{S}^{(t)}| < \rho_{\mathrm{min}}$, where $\rho_{\mathrm{min}}$ denotes the minimum required number of samples, the set is reconstructed with $c'>c$ to modify the confidence level so that more samples are included.
Based on this set, the variance is estimated using the truncated samples, which inherently introduces bias. As discussed above, since $|\bar{\mathbf{d}}|^2 \sim \mathrm{Exp}(D_0)$, the truncated mean obtained by retaining only the samples below the threshold $c D_0$ is smaller than the true mean $D_0$. Specifically, Lemma~\ref{lem:truncatedmeanbias} quantifies the bias induced by truncation.
\begin{lemma}\label{lem:truncatedmeanbias}
    Let $X\sim\mathrm{Exp}(\lambda)$. Then, for a truncation threshold $k\lambda$,
    \begin{equation}
        \mathbb{E}[X|X\leq k\lambda] = \lambda \kappa(k),
    \end{equation}
    where
    \begin{equation}
        \kappa(k) = \frac{1-e^{-k}(1+k)}{1-e^{-k}}.
    \end{equation}
\end{lemma}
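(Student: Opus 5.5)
The plan is a direct computation from the definition of conditional expectation, after first settling the parametrization. Although the Notation section writes $\mathrm{Exp}(\lambda)$ with $\lambda$ as a rate, the paper uses $|\bar{\mathbf{d}}|^2\sim\mathrm{Exp}(D_0)$ with median $D_0\ln 2$ and mean $D_0$. So the lemma is meant with $\lambda$ as the mean (scale) parameter. I will therefore use the density $f(x)=\lambda^{-1}e^{-x/\lambda}$ for $x\geq 0$. This reading is also the only one under which the right-hand side $\lambda\kappa(k)$ has the correct units, since the threshold is $k\lambda$. I expect this reconciliation to be the only real obstacle; the rest is routine.

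With that density, I will write
\begin{equation*}
\mathbb{E}[X|X\leq k\lambda]=\frac{\int_0^{k\lambda} x\,\lambda^{-1}e^{-x/\lambda}\,dx}{\int_0^{k\lambda}\lambda^{-1}e^{-x/\lambda}\,dx}.
\end{equation*}
The substitution $u=x/\lambda$ turns the numerator into $\lambda\int_0^{k}u e^{-u}\,du$ and the denominator into $\int_0^k e^{-u}\,du=1-e^{-k}$. This shows immediately that the conditional mean scales linearly in $\lambda$ with a factor depending only on $k$.

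It then remains to evaluate $\int_0^k u e^{-u}\,du$ by integration by parts. With antiderivative $-(1+u)e^{-u}$, this gives $1-e^{-k}(1+k)$. Dividing by $1-e^{-k}$ yields $\kappa(k)$, which establishes the claim.

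As a consistency check, I will note two limits. As $k\to\infty$, $\kappa(k)\to 1$, recovering the untruncated mean. As $k\to 0$, a Taylor expansion gives $\kappa(k)\approx k/2$, the mean of a nearly uniform distribution on $[0,k\lambda]$. Since $\kappa(k)<1$ for all finite $k$, this also confirms the downward bias that motivates the lemma.
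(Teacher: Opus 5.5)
Your proposal is correct and matches the paper's proof essentially step for step: the definition of conditional expectation, the substitution $u=x/\lambda$, and integration by parts with antiderivative $-(1+u)e^{-u}$. Your parametrization remark is also right. The paper's proof uses the mean-parametrized density $\lambda^{-1}e^{-x/\lambda}$ without comment, even though the Notation section states the rate convention, and the lemma holds only under the mean-parametrized reading.
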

\begin{proof}
    Please refer to Appendix~\ref{app:prooflem}.
\end{proof}
Thus, the truncated mean exhibits a negative bias of $\kappa(c)$. Accordingly, a bias correction is applied to obtain the noise variance estimate as
\begin{equation}\label{eqn:noisepowersum}
    \widehat{D}_0^{(t+1)} = \frac{1}{\kappa(c)}\cdot \frac{1}{|\mathcal{S}^{(t)}|}
    \sum_{m\in\mathcal{S}^{(t)}} |\bar{h}^\prime_m|^2,
\end{equation}
where $1/\kappa(c)$ is a correction factor that accounts for the bias induced by truncation.
Since this estimate may still be unstable, particularly in the early iterations, it is further refined through a small number of iterations $T$.

The overall procedure of the composite noise variance estimation is summarized in Algorithm~\ref{alg:noisepwrest}.
The computational complexity of this procedure can be analyzed as follows. 
First, finding the median in the MAD initialization can be performed with a complexity of $\mathcal{O}(M)$ in both the average and worst cases by using the median-of-medians algorithm~\cite{ref:medianofmedians}. 
The formation of the noise-dominated set in \eqref{eqn:setofnoise} requires $\mathcal{O}(M)$. 
If the cardinality of the set is smaller than the minimum required number of samples $\rho_{\mathrm{min}}$, the adjustment process also incurs a complexity of $\mathcal{O}(M)$.
The subsequent summation and bias correction in \eqref{eqn:noisepowersum} also require $\mathcal{O}(M)$. 
Since this procedure is repeated for $T$ iterations, the overall computational complexity is $\mathcal{O}(TM)$.

\begin{algorithm}[t]
\caption{Blind Composite Noise Power Estimator}\label{alg:noisepwrest}
\SetAlgoLined
\textbf{input:} $|\bar{\mathbf{h}}^\prime|^2$, $c$, $c^\prime$, $\rho_\mathrm{min}$, $T$

\textbf{initialize} $\widehat{D}_0^{(0)}=\dfrac{\mathrm{median}(|\bar{\mathbf{h}}^\prime|^2)}{\ln 2}$

\For{$t=1,...,T$}{ 
    Calculate the threshold
    \begin{equation}
        \tau^{(t)} = c\widehat{D}_0^{(t-1)}.
    \end{equation}

    Determine the set of noise as
    \begin{equation}\label{eqn:setofnoise1}
        \mathcal{S}^{(t)} = \{m: |\bar{h}^\prime_m|^2\leq \tau^{(t)}\}.
    \end{equation}
    
    \If{$|\mathcal{S}^{(t)}|< \rho_\mathrm{min}$}{
        Increase the confidence interval and determine the set of noise again as
        \begin{equation}\label{eqn:setofnoise2}
            \mathcal{S}^{(t)} = \{m: |\bar{h}^\prime_m|^2\leq c^\prime\widehat{D}_0^{(t-1)}\}.
        \end{equation}
    }
    Calculate the noise power estimate as
    \begin{equation}
        \widehat{D}_0^{(t)} = \frac{1}{\kappa(c)}\cdot\frac{1}{|\mathcal{S}^{(t)}|}\sum_{m\in\mathcal{S}^{(t)}}|\bar{h}_m^\prime|^2.
    \end{equation}
    }
    
\Return {Composite noise power estimate $\widehat{D}_0$}    

\end{algorithm}

Additionally, under the assumption that the signal and the composite noise are uncorrelated under the AQNM, the following relationship holds:
\begin{equation}
\mathbb{E}[|\bar{\mathbf{h}}^\prime|^2] = \mathbb{E}[\alpha |\bar{\mathbf{h}}|^2] + \mathbb{E}[|\bar{\mathbf{d}}|^2].
\end{equation}
Accordingly, the average channel power can be derived as~\cite{ref:mad, ref:binaryht}
\begin{equation}\label{eqn:channelpwr}
    \widehat{P}_{\bar{h}} = \max\left\{ \frac{\|\bar{\mathbf{h}}^\prime\|^2}{M} - \widehat{D}_0 \right\},
\end{equation}
and the SDNR can be derived as follows~\cite{ref:mad, ref:binaryht}:
\begin{equation}\label{eqn:snrest}
\widehat{\mathrm{SDNR}} = \max\left\{ \frac{\|\bar{\mathbf{h}}^\prime\|^2}{M\widehat{D}_0} - 1, 0 \right\}.
\end{equation}
The clipping at zero is introduced to prevent negative values in the channel power and linear-scale SDNR, which may occur when the noise power is overestimated due to estimation inaccuracies.
Note that this estimator has a computational complexity of $\mathcal{O}(M)$.

Using these estimates, the activity rate $q=p(\mathcal{H}_1)$ can be estimated as\cite{ref:binaryht}
\begin{equation}\label{eqn:activityest}
    \hat{q} = \argmin_{q^\prime = \{\frac{m}{M} | m=1,\cdots,M\}} \left| \dfrac{2(\widehat{\mathrm{SDNR}})^2}{\frac{1}{M\widehat{D}_0^2}\sum_{m=1}^M|\bar{h}_m^\prime|^4  -2-4\widehat{\mathrm{SDNR}}} - q^\prime \right|
\end{equation}
Note that this estimator also has a computational complexity of $\mathcal{O}(M)$.

\subsection{Denoising}
Based on the above estimates, the threshold in \eqref{eqn:threshold} is computed to distinguish between elements that contain only noise and those that contain both signal and noise, and it can be presented as
\begin{equation}\label{eqn:realthreshold}
    |\bar{h}_m^\prime|^2 \mathop{\gtrless}_{\mathcal{H}_0}^{\mathcal{H}_1}
    \widehat{D}_0 \left(1+\frac{\hat{q}}{\widehat{\mathrm{SDNR}}}\right)\ln \left( \left(1 + \frac{\widehat{\mathrm{SDNR}}}{\hat{q}}\right) \frac{1-\hat{q}}{\hat{q}}C  \right).
\end{equation}
Accordingly, denoising is performed by zeroing the elements that are more likely to contain only noise and scaling to compensate for the quantization gain $\alpha$, which can be expressed as
\begin{equation}\label{eqn:denoising}
    \bar{h}_m^\star =
    \begin{cases}
        \dfrac{1}{\alpha}\bar{h}_m^\prime & \textnormal{if } \mathcal{H}_1, \\
        0 & \textnormal{if } \mathcal{H}_0.
    \end{cases}
\end{equation}
This operation can be interpreted as a hard-thresholding rule, where elements exceeding the threshold are retained while the others are set to zero. 
Additionally, since the quantization gain is applied to the noisy observation as shown in \eqref{eqn:beamspace}, it is compensated by scaling with $1/\alpha$. Here, the value of $\alpha$ is determined following the derivations in \cite{ref:aqnm_values} according to the ADC resolution~\cite{ref:aqnm}.
Since this process involves a single pass over all elements, the overall computational complexity is $\mathcal{O}(M)$.

Overall, the proposed method performs channel denoising through five main steps, as outlined in Algorithm~\ref{alg:overall}. 
The first step is blind composite noise variance estimation, described in Algorithm~\ref {alg:noisepwrest}. 
The second step is blind channel power and SDNR estimation in \eqref{eqn:channelpwr} and \eqref{eqn:snrest}, followed by blind activity rate estimation in~\eqref{eqn:activityest}. 
The fourth step computes the threshold as in~\eqref{eqn:realthreshold}, 
and the final step performs denoising in~\eqref{eqn:denoising}.
As discussed above, the computational complexity of the first step is $\mathcal{O}(TM)$, while the second, third, and fifth steps each have a complexity of $\mathcal{O}(M)$. The threshold computation in the fourth step is based on closed-form simple arithmetic and has a complexity of $\mathcal{O}(1)$. Therefore, the overall computational complexity of the proposed method is $\mathcal{O}(TM)$. Since we assume that $T$ is small, the proposed denoiser has a near-linear computational complexity.

\begin{remark}\label{rmk:snrknowledge}
    If the system has prior knowledge of the AWGN variance or SNR, the composite noise variance $D_0$ can be determined based on commonly used values in the AQNM. 
    In this case, the computation required for estimating the composite noise power, channel power, and the SDNR can be omitted, thereby reducing the computational complexity of the denoising process to $\mathcal{O}(M)$.
\end{remark}

\begin{algorithm}[t]
\caption{Low-Complexity Beamspace Channel Denoising}\label{alg:overall}
\SetAlgoLined
\textbf{input:} $\bar{\mathbf{h}}^\prime$, $C$, $c$, $c^\prime$, $\rho_\mathrm{min}$, $T$\\
\textbf{initialization:} denoised beamspace channel vector $\bar{\mathbf{h}}^\star=\mathbf{0}^{M\times 1}$\\
Calculate the composite noise power variance estimate $\widehat{D}_0$ with Algorithm~\ref{alg:noisepwrest}. \\
Estimate $\widehat{\mathrm{SDNR}}$ and $\hat{q}$ using $\widehat{D}_0$.\\
Calculate the testing threshold with $\widehat{D}_0,\widehat{\mathrm{SDNR}},\hat{q}$, which is given by
\begin{equation}
    \eta = \widehat{D}_0\left( \frac{\hat{q}}{\widehat{\mathrm{SDNR}}}+1 \right) \ln \left(\left( 1+\frac{\widehat{\mathrm{SDNR}}}{\hat{q}} \right) \frac{1-\hat{q}}{\hat{q}}C \right).
\end{equation}\\
\For{$m=1:M$}{
    \If{$|\bar{h}^\prime_m|^2\geq \eta$}{
    $\bar{h}_m^\star = \dfrac{1}{\alpha} \bar{h}_m^\prime$.
    }
}
\Return{\textnormal{denoised beamspace channel vector} $\bar{\mathbf{h}}^\star$}    
\end{algorithm}
\section{Simulation Results}\label{sec:simresults}
\subsection{Simulation Configuration}
For performance evaluation, we consider a BS equipped with a 64-element ULA, where the antenna spacing is set to half of the carrier wavelength. For the channel model, we employ the QuaDRiGa mmMAGIC UMi model~\cite{ref:quadriga} with a carrier frequency of 50 GHz. The ADC quantization levels are configured to minimize the quantization noise, and all antennas share the same quantization resolution. 
For input parameters, we set $c$ and $c^\prime$ to 2 and 4, respectively, to facilitate hardware implementation, since powers of two can be realized using shift operations instead of multiplications.
These values correspond to probabilities of 86.5\% and 98.2\%, respectively, for a noise-only sample to fall below the threshold. In addition, the cost parameter $C$ is also set to 4, which is favorable for hardware implementation since it is a power-of-two value.

We perform 10000 Monte Carlo trials, where the channel is independently regenerated for each trial. To evaluate the channel estimation performance of the proposed method, we compare the mean-squared error (MSE) with several baseline algorithms. As basic baselines, we consider the least-squares (LS)-based estimator, which uses the noisy observation directly as the channel estimate and the Bussgang-based linear MMSE (LMMSE) channel estimation method~\cite{ref:aqnm_extend}. In addition, we include comparisons with $\alpha$-BEACHES~\cite{ref:sand}, which is a low-complexity beamspace channel denoising algorithm, GL-QVBCE~\cite{ref:glqvbce}, which is one of the state-of-the-art CS-based methods, LDGEC~\cite{ref:ldgec}, which is the deep learning-based beamspace channel denoising algorithm, and a diffusion model-based approach~\cite{ref:ul1}. 
Furthermore, to assess the practical impact of channel estimation accuracy, we evaluate the uncoded bit-error rate (BER) for uplink transmission with 8 users and 16-quadrature amplitude modulation (16QAM). Specifically, an LMMSE equalizer is applied, where the filter matrix is given by
\begin{equation}
    \mathbf{W} = \hat{\mathbf{H}}( \hat{\mathbf{H}}^H \hat{\mathbf{H}}+\frac{1}{\mathrm{SNR}}\mathbf{I}_K)^{-1},
\end{equation}
where $\hat{\mathbf{H}}$ is the estimated channel matrix.
To provide a performance benchmark, we also include the case where the system has access to perfect CSI, allowing us to compare the performance with the optimal scenario.

\subsection{Complexity Analysis}
We compare the computational complexity of the proposed method with that of several baseline estimators. 
Table~\ref{tbl:complexity_time} summarizes the complexity of each channel estimation technique in big-$\mathcal{O}$ notation, along with the normalized operation time measured in Python.
The proposed method requires $\mathcal{O}(M \log M)$ complexity for the beamspace transformation and $\mathcal{O}(TM)$ for the denoising process, resulting in an overall low computational complexity compared to other methods.
The Bussgang-based LMMSE estimator~\cite{ref:aqnm_extend} is the simplest approach, with $\mathcal{O}(M)$ complexity. However, it requires prior knowledge of the SNR. If this assumption is incorporated, as discussed in Remark~\ref{rmk:snrknowledge}, the overall complexity of the proposed method becomes $\mathcal{O}(M \log M + M) = \mathcal{O}(M \log M)$.
$\alpha$-BEACHES~\cite{ref:sand} has $\mathcal{O}(M \log M)$ complexity, since both the beamspace transformation and the denoising parameter search incur $\mathcal{O}(M \log M)$ complexity. This method also requires prior knowledge of the SNR. Under the same assumption that SNR is known and the beamspace transformation is equally performed, the proposed method requires only $\mathcal{O}(M)$ for denoising, whereas $\alpha$-BEACHES requires $\mathcal{O}(M \log M)$ for denoising, indicating that the proposed method is computationally more efficient.
GL-QVBCE~\cite{ref:glqvbce} has a complexity of $\mathcal{O}(N_I(\hat{L}^3 + M\hat{L}^2))$, where $N_I$ denotes the number of iterations and $\hat{L}$ is the estimated number of dominant propagation paths. Although mmWave channels are sparse and $\hat{L}$ is typically small, the normalized operation time is relatively high due to the large number of iterations required to satisfy the convergence criteria.
The LDGEC~\cite{ref:ldgec} requires $\mathcal{O}(N_I M^3)$ complexity, as it involves matrix inversion at each iteration. This leads to a relatively long operation time compared to other methods. Additionally, since it is also a beamspace channel estimation algorithm, it requires beamspace transformation for preprocessing, which incurs a complexity of $\mathcal{O}(M \log M)$. However, it is dominated by the $\mathcal{O}(N_I M^3)$ term and is therefore omitted in the big-O notation.
The diffusion-based approach~\cite{ref:ul1} has a complexity of $\mathcal{O}(M \log M + N_I(N_L M k C^2 + M^2))$, where $N_L$, $k$, and $C$ denote the number of CNN layers, kernel size, and channel depth, respectively. This is because the method performs iterative posterior updates in a reverse diffusion process, combining CNN-based prior information and likelihood gradients on the beamspace-transformed channel. Although its normalized operation time is slightly shorter than some iterative methods, due to the absence of highly complex operations within each iteration, it is still significantly higher than that of the proposed method.

% \begin{table*}[]
%     \caption{Computational Complexity and Normalized Operation Time}\label{tbl:complexity_time}
%     {\centering
%     \begin{tabular}{|c|ccc|}
%     \hline
%     Algorithm  & Complexity                   & Runtime & Learning-based \\ \hline\hline
%     Bussgang-based LMMSE$^a$ & $\mathcal{O}(M)$                &                           & No             \\ 
%     $\alpha$-BEACHES$^a$ & $\mathcal{O}(M\log M)$ & & No \\
%     GL-QVBCE   & $\mathcal{O}(N_I(\hat{L}^3 + M\hat{L}^2))$             &                           & No             \\ 
%     LDGEC      & $\mathcal{O}(N_IM^3)$          &                           & Yes            \\ 
%     Diffusion  & $\mathcal{O}(M\log M +N_I(N_LMkC^2+M^2))$              &                           & Yes            \\ 
%     Proposed   & $\mathcal{O}(M \log M + TM)$ & 1.0                       & No             \\ \hline
%     \end{tabular}\par}
%     \vspace{2mm}
%     {\footnotesize
%     $^a$Requires the knowledge of SNR}
% \end{table*}

\begin{table*}[]
\caption{Computational Complexity and Normalized Operation Time}\label{tbl:complexity_time}
    {\centering
\begin{tabular}{|c|c|c|c|c|}
\hline
Algorithm                & Complexity                                 & \multicolumn{1}{c|}{\begin{tabular}[c]{@{}c@{}}Runtime \\ (with prior knowledge)\end{tabular}} & \begin{tabular}[c]{@{}c@{}}Runtime\\ (without prior knowledge)\end{tabular} & Learning-based \\ \hline\hline
Bussgang-based LMMSE$^a$ & $\mathcal{O}(M)$                           & \multicolumn{1}{c|}{0.05$\times$}                                                                                 & -                                                                                  & No             \\ \hline
$\alpha$-BEACHES$^a$     & $\mathcal{O}(M \log M)$                    & \multicolumn{1}{c|}{1.39$\times$}  & - & No \\ \hline
GL-QVBCE & $\mathcal{O}(N_I(\hat{L}^3 + M\hat{L}^2))$ & \multicolumn{2}{c|}{622$\times$}                                                                    & No             \\ \hline
LDGEC                    & $\mathcal{O}(N_IM^3)$                      & \multicolumn{2}{c|}{1453$\times$}                                                                                                                                                                      & Yes            \\ \hline
Diffusion                & $\mathcal{O}(M\log M +N_I(N_LMkC^2+M^2))$  & \multicolumn{2}{c|}{514$\times$}                                                                                                                                                                      & Yes            \\ \hline
Proposed                 & $\mathcal{O}(M \log M + TM)$               & \multicolumn{1}{c|}{1.00$\times$}                                                                              & 1.24$\times$  & No             \\ \hline
\end{tabular}\par}
    \vspace{2mm}
    {\footnotesize
    $^a$Requires the knowledge of SNR}
\end{table*}

\begin{figure}
    \centering
    \includegraphics[width=\linewidth]{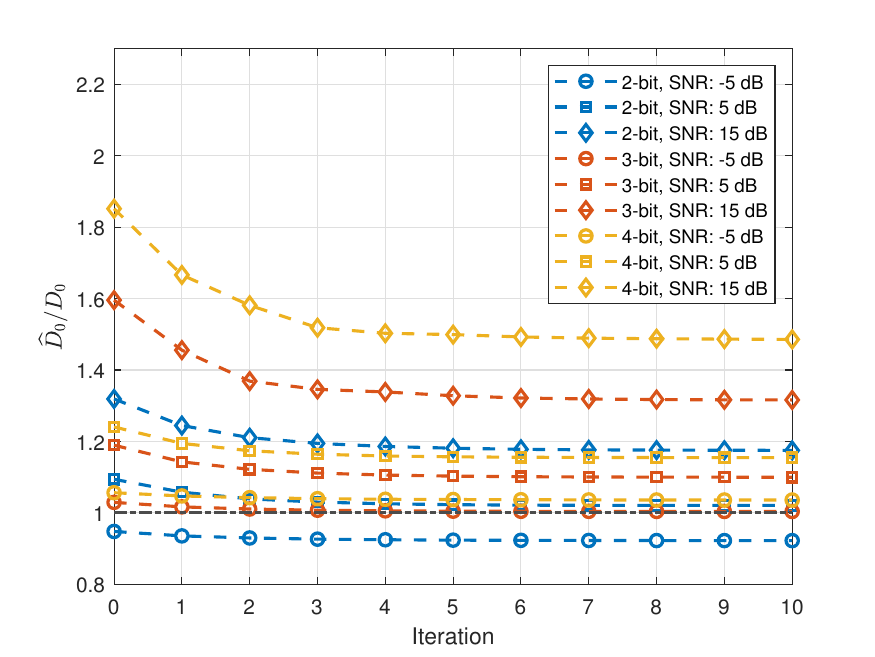}
    \caption{Estimated $D_0$ versus iterations.}
    \label{fig:iteration_D0}
\end{figure}

\subsection{Numerical Results and Analysis}
Fig.~\ref{fig:iteration_D0} shows the average $\widehat{D}_0$ across iterations, normalized with respect to the ground-truth $D_0$. The value at iteration 0 corresponds to the initialization obtained using the MAD estimator, and the estimator tends to slightly overestimate the variance as the SNR increases. This is because the initial value from the MAD estimate becomes larger, resulting in a wider confidence interval, which in turn causes more beamspace elements containing both signal and noise components to be misclassified as noise. Furthermore, it tends to yield higher estimation accuracy for lower ADC resolutions. This is because the impact of quantization noise becomes more significant, increasing the power of the composite noise. As a result, even if some beamspace elements that are not noise-dominant are misclassified as noise, the estimation becomes less sensitive to such misclassifications. 
In addition, the estimation generally converges within three iterations. Based on this observation, the number of iterations is set to three in both following simulations and hardware implementations.

\begin{figure}
    \centering
    \subfloat[MSE]{\includegraphics[width=\linewidth]{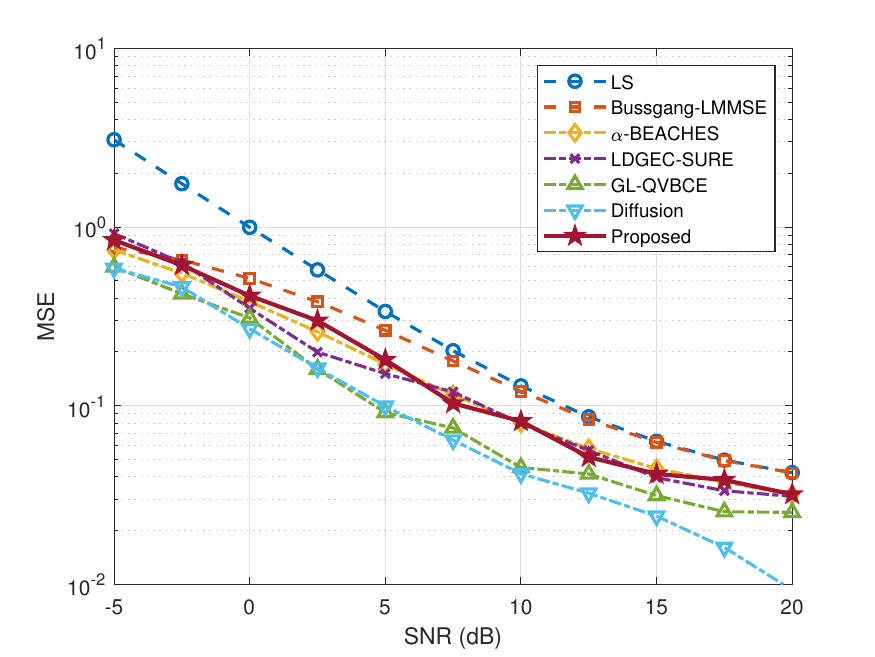}\label{subfig:mse_comparison}}

    \subfloat[Post-equalization BER]{\includegraphics[width=\linewidth]{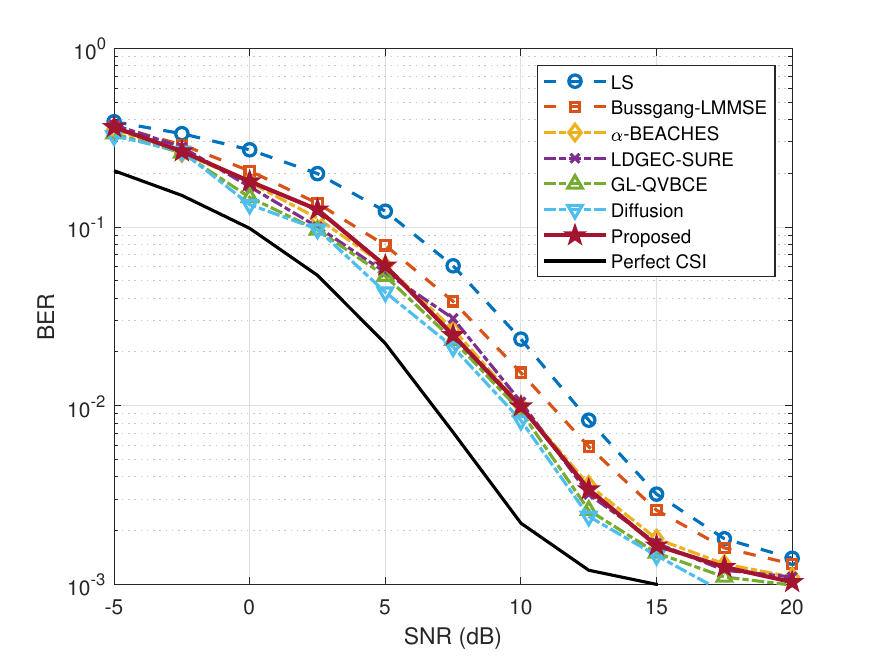}\label{subfig:ber_comparison}}
    \caption{Performance depending on SNR compared to baselines with 3-bit ADCs.}
    \label{fig:comparison}
\end{figure}

Fig.~\ref{fig:comparison} presents the channel estimation performance under a 3-bit ADC scenario. As shown in Fig.~\ref{subfig:mse_comparison}, the proposed method achieves comparable channel estimation performance to LDGEC-SURE and $\alpha$-BEACHES, while outperforming Bussgang-LMMSE and LS estimators. On the other hand, it exhibits slightly higher MSE compared to GL-QVBCE and diffusion-based methods, which can be attributed to the performance trade-off resulting from reduced computational complexity. 
Nevertheless, as shown in Fig.~\ref{subfig:ber_comparison}, the post-equalization BER of the proposed method is nearly identical to that of computationally intensive approaches. This indicates that the proposed method can achieve practically comparable performance to methods relying on deep learning, iterative optimization, or parameter search, despite not employing such techniques.

Fig.~\ref{fig:bit} illustrates the performance gain achieved by applying the proposed denoising method by comparing it with the LS estimate, i.e., a noisy observation before denoising.
As shown in Fig.~\ref{subfig:mse_bit}, the proposed method significantly reduces the MSE across all SNR regimes and all ADC resolutions compared to the case without denoising. Moreover, the performance gain becomes more pronounced as the ADC resolution decreases, and this trend is particularly evident in the low-SNR regime. In terms of post-equalization BER, as shown in Fig.~\ref{subfig:ber_bit} for 2-3-bit ADCs, the proposed denoising provides approximately 3-5 dB SNR gain at the same BER. Meanwhile, for 4-bit ADCs, the performance gain is around 1 dB at the same BER level. This is because the impact of quantization noise to be suppressed by denoising becomes less significant as the ADC resolution increases. These results demonstrate that the proposed method provides meaningful performance improvements despite its low computational complexity.

\begin{figure}
    \centering
    \subfloat[MSE]{\includegraphics[width=\linewidth]{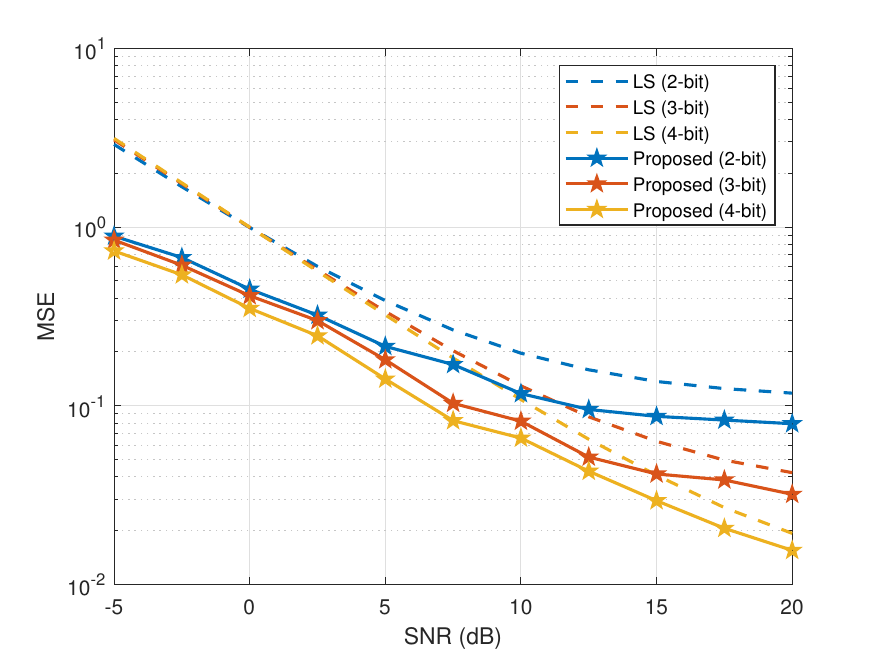}\label{subfig:mse_bit}}

    \subfloat[Post-equalization BER]{\includegraphics[width=\linewidth]{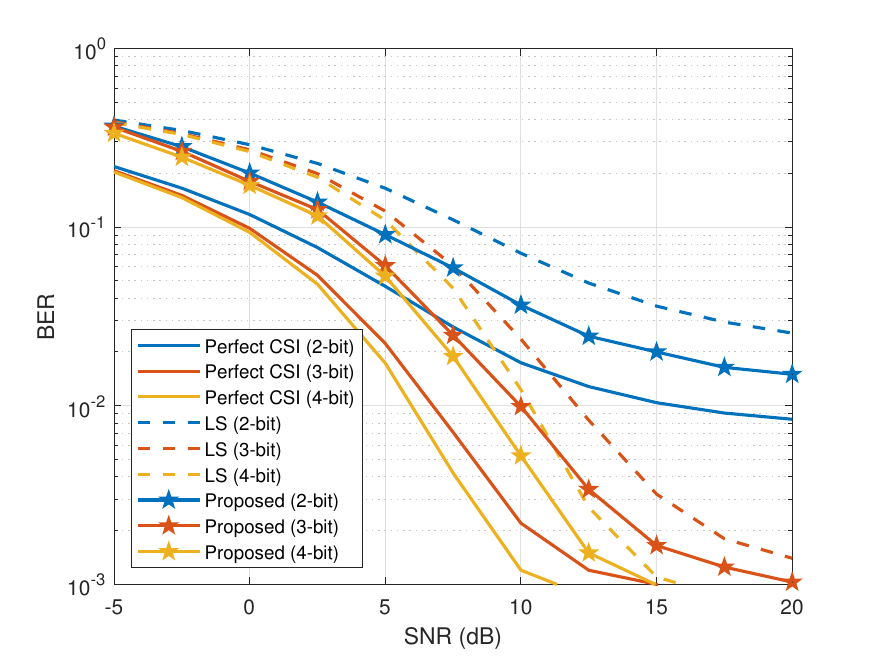}\label{subfig:ber_bit}}
    \caption{Performance depending on SNR and ADC resolutions.}
    \label{fig:bit}
\end{figure}

\section{Hardware Design and Implementation}\label{sec:hardware}
In this section, we present a VLSI architecture for the proposed channel estimation algorithm devised in Section~\ref{sec:proposed}. We also provide the corresponding FPGA implementation results.

\subsection{Hardware Architecture}
Fig.~\ref{fig:vlsi_highlevelarchi} shows a high-level block diagram of the proposed VLSI architecture, which implements the algorithm devised in the previous section. 
The architecture consists of four main parts: (i) preprocessing and postprocessing units, including the fast Fourier transform (FFT) unit, element-wise magnitude squaring, and inverse FFT (IFFT), (ii) a parameter estimation unit consisting of the composite noise variance estimator, channel power estimator, SDNR estimator, and activity rate estimator, (iii) a threshold calculation unit, and (iv) a denoising unit. 
In the preprocessing stage, the FFT unit transforms the received signal from the antenna domain to the beamspace domain. 
The transformed signal is then fed into the element-wise magnitude squaring unit, where it is converted into squared magnitudes, while the original beamspace signal is also used as the noisy observation for the denoising process.
The squared magnitudes are utilized in the auxiliary estimation unit to estimate the required parameters, and are additionally employed in the hypothesis testing procedure within the denoising unit.
Note that the units in the yellow area can be omitted if prior knowledge of the SNR is available, as discussed in Remark~\ref{rmk:snrknowledge}.
The estimated parameters are subsequently used in the threshold calculation unit to determine the threshold value $\eta$.
Based on $\eta$ and the element-wise squared magnitudes, each beamspace channel element is processed by the denoising unit, as shown in Algorithm~\ref{alg:overall}.
Finally, the denoised beamspace channel vector is transformed back to the antenna domain via the IFFT.

To improve hardware efficiency, a two’s complement fixed-point representation is employed. 
The word lengths are optimized to maintain performance comparable to floating-point operation results. 
The antenna domain signals are represented with 16 bits and 8 fractional bits, whereas the beamspace domain signals use 10 bits with 8 fractional bits, as the scaling is introduced by the FFT operation. The output of the element-wise squaring unit is expressed using 16 bits with 8 fractional bits, and the same representation is applied to both $\widehat{D}_0$ and $\widehat{P}_{\bar{h}}$. In addition, the SDNR estimate is represented using 24 bits with 8 fractional bits.

\begin{figure*}
    \centering
    \includegraphics[width=0.9\linewidth]{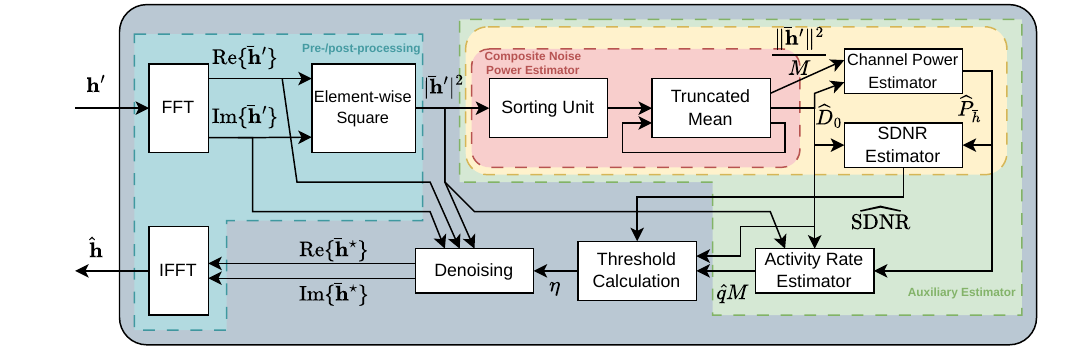}
    \caption{High-level architecture of the proposed algorithm in VLSI design; the yellow area (estimators for composite noise power, channel power, and SDNR) can be omitted if the prior knowledge of the SNR is available.}
    \label{fig:vlsi_highlevelarchi}
\end{figure*}

Fig.~\ref{fig:vlsi_d0} illustrates the proposed $\widehat{D}_0$ estimation unit, which consists of a sorting unit and a truncated mean unit. The sorting unit outputs one sorted value per clock cycle using an insertion-sort-based mechanism, while the truncated mean unit estimates the composite noise power $\widehat{D}_0$ based on the sorted sequence.
Although the sorting operation is not explicitly included in Algorithm~\ref{alg:noisepwrest}, it is incorporated in the hardware architecture to reduce computational complexity. Specifically, the sorted values can be directly utilized for median computation and facilitate efficient evaluation of \eqref{eqn:setofnoise} and the corresponding truncated mean, thereby lowering the overall implementation cost.
As shown in Fig.~\ref{subfig:sortingunit}, the sorting unit is implemented using a systolic insertion-like architecture, where sorting is progressively achieved as data propagate through a cascade of pipeline registers.
The module is controlled by a finite-state machine (FSM) that manages three phases: data loading, pipeline flushing, and sorted output generation. During the loading phase, input samples are streamed into the pipeline at a rate of one sample per clock cycle. Meanwhile, each stage continuously performs a local compare-and-swap operation, where the smaller value is retained locally and the larger value is forwarded to the next stage.
After all inputs are loaded, a flushing phase of $(M-1)$ cycles ensures that remaining values fully propagate through the pipeline. Finally, the sorted outputs are sequentially emitted.
Since each stage performs only a single comparison per cycle and the datapath operates in a fully pipelined manner without global control fanout, the architecture avoids long combinational paths and achieves timing-efficient hardware implementation.

\begin{figure}
    \centering
    \subfloat[]{\includegraphics[width=\linewidth, trim=40 0 40 0, clip]{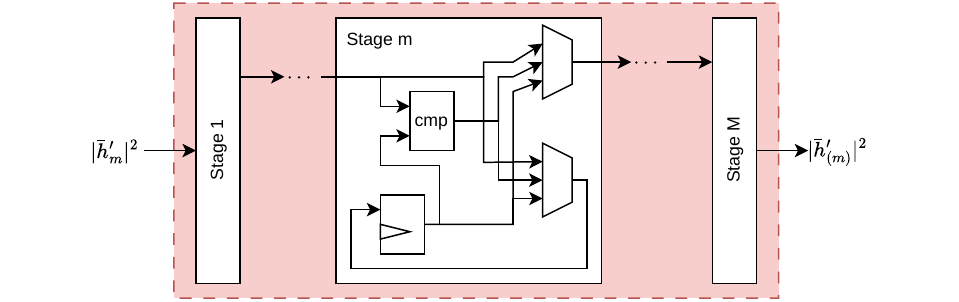}\label{subfig:sortingunit}}
    
    \subfloat[]{\includegraphics[width=\linewidth, trim=40 0 40 0, clip]{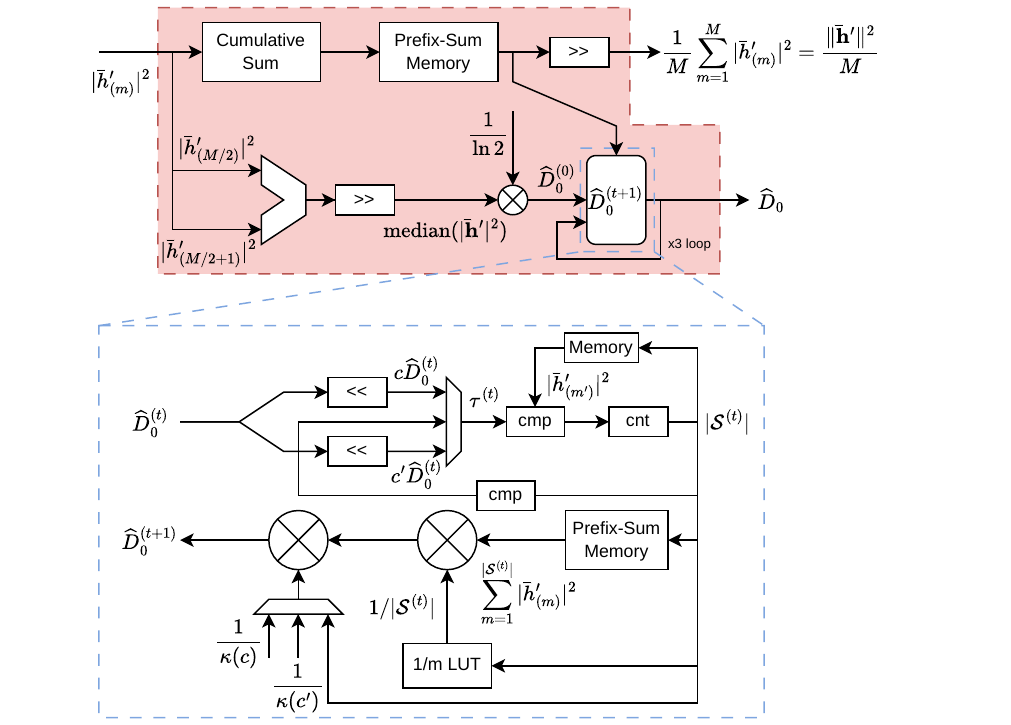}\label{subfig:truncatedmean}}
    \caption{Architecture of the composite noise power estimator. (a) sorting unit (b) truncated mean unit.}
    \label{fig:vlsi_d0}
\end{figure}

Subsequently, as shown in Fig.~\ref{subfig:truncatedmean}, the sorted samples are then streamed into the truncated mean unit, where a cumulative sum is computed and stored as prefix-sums in memory. Let $|\bar{h}^\prime_{(m)}|^2$ denote the $m$-th element in ascending order. The corresponding prefix-sum is given by $\sum_{m^\prime=1}^{m} |\bar{h}^\prime_{(m^\prime)}|^2$, which is subsequently utilized for efficient truncated mean computation.
After the prefix-sums are fully stored in memory, the median is computed using the $(M/2)$-th and $(M/2+1)$-th elements of the sorted sequence. The MAD estimate, which is for the initial value for composite noise variance, is then obtained by multiplying the median by a precomputed constant corresponding to $1/\ln 2$, which is retrieved from a lookup table (LUT). This value serves as the input to the iterative procedure for estimating $\widehat{D}_0$, and this initialization corresponds to line 2 in Algorithm~\ref {alg:noisepwrest}.
The input is subsequently scaled by the parameters $c$ and $c^\prime$, both of which are chosen as powers of two and thus efficiently implemented via bit-shifting operations. The resulting threshold is compared with $|\bar{h}^\prime_{(m)}|^2$. 
If the threshold is larger, the comparison proceeds to the next higher index; otherwise, it moves to the previous index. This approach eliminates the need to scan the entire sequence at each iteration for obtaining the set of noise in \eqref{eqn:setofnoise}. 
When the index falls below $\rho_\mathrm{min}$, the value associated with $c^\prime$ is used instead.
After determining $|\mathcal{S}^{(t)}|$, the corresponding prefix-sum is retrieved from memory using the identified index. 
Using prefix-sum and sorted samples, the computational complexity of lines 7 to 9 in Algorithm~\ref{alg:noisepwrest} can be reduced by avoiding the need to scan all elements.
The estimate is then refined through mean normalization and $\kappa(\cdot)$ normalization, both implemented using LUT-based operations. The resulting value is output as the updated estimate. This iterative process is performed three times, yielding the final estimate $\widehat{D}_0$ for output.

\begin{figure}
    \centering
    \subfloat[]{\includegraphics[width=\linewidth]{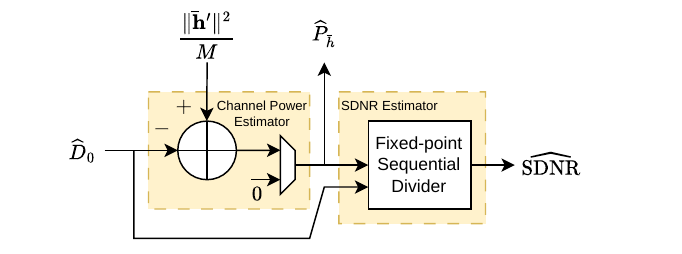}\label{subfig:vlsi_sigsnr}
    }

    \subfloat[]{\includegraphics[width=\linewidth, trim= 40 0 80 0, clip]{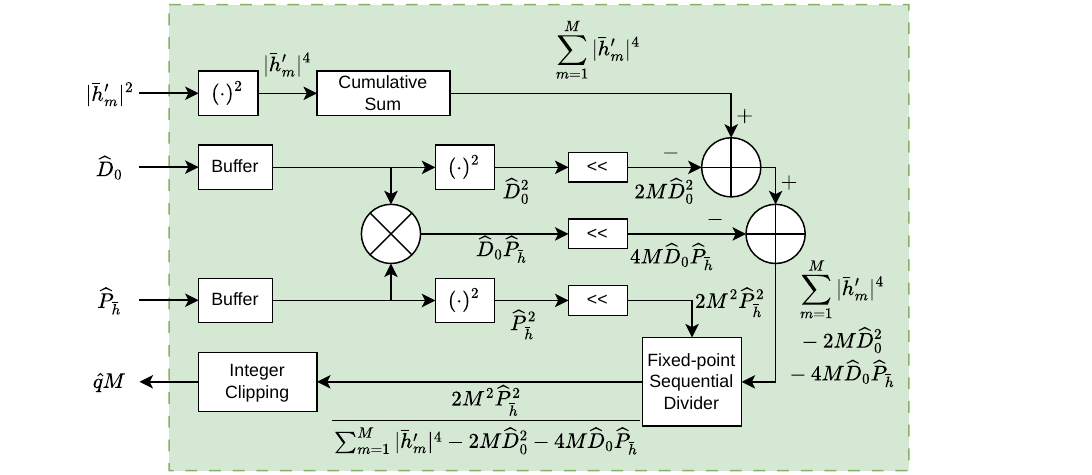}\label{subfig:vlsi_activity}
    }
    \caption{Architecture of the auxiliary estimators. (a) estimator units for average channel power and SDNR (b) activity rate estimator unit.}
    \label{fig:vlsi_auxiliaryest}
\end{figure}

Fig.~\ref{fig:vlsi_auxiliaryest} shows the estimators for average channel power, SDNR, and activity rate. As shown in Fig.~\ref{subfig:vlsi_sigsnr}, the average channel power estimation in \eqref{eqn:channelpwr} is performed, and SDNR, which is the ratio between $\widehat{P}_{\bar{h}}$ and $\widehat{D}_0$, is calculated with a fixed-point sequential pipelined divider. Fig.~\ref{subfig:vlsi_activity} shows the detailed architecture of the activity rate estimator.
The estimate computed in this unit is a modified version of the operation in \eqref{eqn:activityest}. 
The activity rate is inherently bounded, and when multiplied by $M$, it represents the number of active beams, which is an integer-valued quantity.
Accordingly, the output is reformulated to directly produce an integer estimate.
Moreover, directly squaring $\widehat{\mathrm{SDNR}}$ after a fixed-point division may lead to numerical instability, while also introducing unnecessary division operations.
To address this, the expression is reformulated to avoid such operations using  $\mathrm{SDNR}=P_{\bar{h}}/D_0$. As a result, the number of active beams can be estimated as
% \begin{equation}
% \begin{aligned}
\begin{align}
    \hat{q}M & = \argmin_{(\hat{q}M)\in\mathbb{N}} \left| \dfrac{2M(\widehat{\mathrm{SDNR}})^2}{\frac{1}{M\widehat{D}_0^2}\sum_{m=1}^M|\bar{h}_m^\prime|^4  -2-4\widehat{\mathrm{SDNR}}} - \hat{q}M\right| \nonumber\\
    & = \mathrm{round}
     \left(\dfrac{2M^2\widehat{D}_0^2(\widehat{\mathrm{SDNR}})^2}{\sum_{m=1}^M|\bar{h}_m^\prime|^4  -2M\widehat{D}_0^2-4M\widehat{\mathrm{SDNR}}\widehat{D}_0^2} \right)\nonumber\\
    & = \mathrm{round}
     \left( \dfrac{2M^2\widehat{P}_{\bar{h}}^2}{\sum_{m=1}^M |\bar{h}_m^\prime|^4 - 2M\widehat{D}_0^2 - 4M\widehat{D}_0\widehat{P}_{\bar{h}}}
    \right),
\end{align}
% \end{aligned}
% \end{equation}
where $\mathrm{round}(\cdot)$ is the integer clipping function.
To compute the fourth-order moment, the element-wise squared magnitudes obtained in the previous stage are squared once more and accumulated to form the cumulative sum of the fourth-power terms.
During this process, $\widehat{D}_0$ and $\widehat{P}_{\bar{h}}$ are temporarily stored in buffers to ensure proper synchronization with the subsequent arithmetic operations.
Subsequently, multiplications involving powers of two, such as $2$ and $M$, are efficiently implemented using bit-shifting operations, while division operations are carried out using a fixed-point sequential divider. Finally, the result is clipped to the nearest integer and provided as the output.

Fig.~\ref{fig:vlsi_thresholdcalc} shows the detailed architecture of the threshold calculation unit. Since the activity rate estimator outputs $\hat{q}M$ instead of $\hat{q}$, the threshold in \eqref{eqn:realthreshold} can be rewritten in a form that is more convenient for hardware implementation as
% \begin{equation}
    % \begin{aligned}
\begin{align}
        \eta & = \widehat{D}_0 \left(1+\frac{\hat{q}}{\widehat{\mathrm{SDNR}}}\right) \ln \left( \left(1 + \frac{\widehat{\mathrm{SDNR}}}{\hat{q}}\right) \frac{1-\hat{q}}{\hat{q}}C  \right) \nonumber \\
        & = \widehat{D}_0 \left(1+\frac{\hat{q}M}{M\widehat{\mathrm{SDNR}}}\right)
        \ln \left(\left( 1 + \frac{M\widehat{\mathrm{SDNR}}}{\hat{q}M} \right) \frac{M-\hat{q}M}{\hat{q}M}C \right) \nonumber \\
        & = \widehat{D}_0 \left(1+\frac{\hat{q}M}{M\widehat{\mathrm{SDNR}}}\right)
        \Bigg( \ln \left(  1 + \frac{M\widehat{\mathrm{SDNR}}}{\hat{q}M} \right)  \nonumber
        \\ & \quad + \ln(M-\hat{q}M)  - \ln \hat{q}M + \ln C
        \Bigg).
\end{align}
%     \end{aligned}
% \end{equation}
In this unit, only the division by $M\widehat{\mathrm{SDNR}}$ is implemented using a fixed-point sequential divider, while the remaining division operations are replaced by reciprocal LUTs or by exploiting logarithmic properties. Specifically, $\ln(M-\hat{q}M)$ and $\ln(\hat{q}M)$ are obtained via LUTs indexed by the integer input $\hat{q}M$, whereas the other logarithmic operations are performed using piecewise linear approximation. The required coefficients for these approximations are also stored in LUTs, and the final threshold value $\eta$ is then produced.

\begin{figure*}
    \centering
    \includegraphics[width=\linewidth, trim = 45 0 10 0, clip]{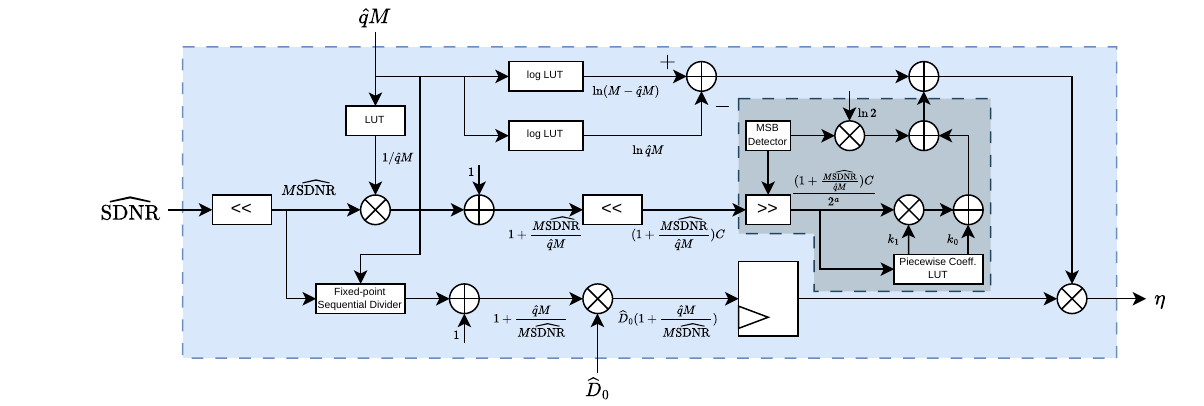}
    \caption{Architecture details of the threshold calculation unit; the gray area represents the piecewise linear approximation of the logarithm.}
    \label{fig:vlsi_thresholdcalc}
\end{figure*}

Fig.~\ref{fig:vlsi_denoise} depicts the architecture of the denoising unit. 
First, the beamspace noisy channel observations obtained from the FFT output, along with their element-wise squared values, are stored in a buffer and held until the computed threshold value $\eta$ becomes available. 
Once the threshold $\eta$ is received, comparison operations are performed to apply hard thresholding. The resulting values are then scaled by the regularization factor $1/\alpha$, which accounts for the quantization gain and is loaded from LUT. 

\begin{figure}
    \centering
    \includegraphics[width=\linewidth]{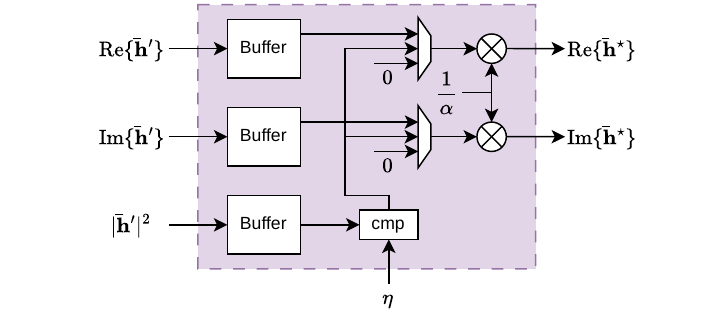}
    \caption{Architecture details of denoising unit.}
    \label{fig:vlsi_denoise}
\end{figure}

\subsection{FPGA Implementation Results}
To verify the practical performance of the proposed method, the VLSI design is implemented using the AMD-Xilinx Kintex UltraScale+ FPGA KCU116 Evaluation Kit.
The implementation is performed across different numbers of BS antennas, and the results are presented in Table~\ref{tbl:implementation1}. 
Overall, the design utilizes only a small fraction of the available FPGA hardware resources, even in the case of $M=128$, which incurs the highest resource consumption.
Moreover, the increase in resource utilization with respect to $M$ is sublinear.
The required number of clock cycles also increases with $M$, primarily due to the presence of modules that operate sequentially across vector elements.
Nevertheless, even for the larger antenna configuration ($M=128$), the latency remains as low as $4.210~\mu$s. Moreover, even when estimating the channels of 16 users, the total processing time is only $67.36~\mu$s, which is significantly shorter than the millisecond-scale coherence time of mmWave channels~\cite{ref:mmwavecoherencetimeshort}.
% In addition, the power consumption is low, measured at 287 mW and 426 mW for 64- and 128-element ULAs, respectively.

The resource utilization breakdown for each module is summarized in Table~\ref{tbl:implementationbreakdown}. The largest portion of the latency is attributed to the preprocessing and postprocessing stages, where the FFT and IFFT operations consume the majority of clock cycles.
In addition, the activity rate estimation unit requires 6 clock cycles, the threshold computation unit for $\eta$ requires 3 clock cycles, and the denoising unit consumes 67 and 131 clock cycles for $M=64$ and $M=128$, respectively.
If prior knowledge of the SNR is available, the estimation processes for $\widehat{D}_0$, $\widehat{P}_{\bar{h}}$, and $\widehat{\mathrm{SDNR}}$ can be omitted. In this case, the overall number of required clock cycles can be reduced by approximately 35\%, while achieving even greater reductions in LUT and flip-flop utilization. 

Table~\ref{tbl:vlsicomparison} compares the proposed hardware implementation with prior art hardware implementations for massive MIMO channel estimation. 
The proposed method is the only one capable of effectively handling the impact of quantization noise arising from low-resolution ADCs. Furthermore, under the same assumption regarding the availability of SNR prior knowledge, it demonstrates significantly lower latency, both in terms of clock cycles and absolute time, compared to existing approaches.

\begin{table}[]
    \caption{FPGA hardware resource utilization results with Xilinx Kintex UltraScale+ FPGA KCU116 Evaluation Kit}
    \label{tbl:implementation1}
    {\centering
    \begin{tabular}{|c|c|c|}
    \hline
    $M$                             & 64    & 128 \\ \hline\hline
    LUTs                            & 7705 (3.55\%)  & 11664 (5.38\%)   \\ 
    LUTs as logic                   & 7327 (3.38\%) & 11138  (5.13\%)  \\ 
    LUTs as memory                  & 378  (0.38\%) & 526 (0.53\%)   \\ 
    Flipflops                       & 10633 (2.45\%) & 17458  (4.02\%)  \\ 
    DSP48 units                     & 27 (1.48\%) & 33  (1.81\%)  \\ 
    Maximum clock frequency (MHz)   & 339   &  314   \\ 
    Latency (clock cycles)          & 770   &  1322 \\ 
    Latency ($\mu$s)                & 2.271 &  4.210   \\
    Dynamic power consumption$^a$ (W)           & 0.287 &  0.426   \\ 
    Maximum throughput$^b$ (Mvectors/s) & 5.297 &  2.453  \\ \hline
    \end{tabular}\par}
    \vspace{2mm}
    {\footnotesize
    $^{a}$The power consumption is obtained from the statistical power estimation at the maximum clock frequency under a supply voltage of 0.85 V. \\
    $^{b}$The maximum throughput is obtained in million vectors processed per second and is calculated as $f_\mathrm{max}/M$, where $f_\mathrm{max}$ denotes the maximum clock frequency.}
\end{table}

\begin{table*}[]
    \caption{FPGA hardware resource utilization breakdown with Xilinx Kintex UltraScale+ FPGA KCU116 Evaluation Kit}
    \label{tbl:implementationbreakdown}
    \centering
    
\begin{tabular}{|c|c|ccccc|}
\hline
                         & Modules                  & Pre-/post-processing & \begin{tabular}[c]{@{}c@{}}Estimators for $\widehat{D}_0$,\\ $\widehat{P}_{\bar{h}}$, and $\widehat{\mathrm{SDNR}}$\end{tabular} & $\hat{q}$ estimator & $\eta$ calculation & Denoising \\ \hline\hline
\multirow{4}{*}{$M=64$}  & LUTs                  & 1858                 & 4340                                                                                                                                & 447                 & 475   &  615                       \\  
                         & Flipflops             & 3189                 & 4940                                                                                                                                & 580                 & 493 & 1397                            \\  
                         & DSP48 units           & 12                   & 3                                                                                                                                   & 4                   & 8  & 0                             \\  
                         & Latency (clock cycle) & 424                  & 270                                                                                                                                 & 6                   & 3 & 67                               \\ \hline\hline
\multirow{4}{*}{$M=128$} & LUTs                  & 2313                 & 7385                                                                                                                                & 463                 & 475 & 1051                            \\  
                         & Flipflops             & 3958                 & 9723                                                                                                                                & 587                 & 495 & 2686                            \\  
                         & DSP48 units           & 18                   & 3                                                                                                                                   & 4                   & 8  & 0                             \\  
                         & Latency (clock cycle) & 700                  & 482                                                                                                                                 & 6                   & 3 & 131                              \\ \hline
\end{tabular}
\end{table*}

\begin{table*}[]
    \caption{FPGA implementation result comparison}
    \label{tbl:vlsicomparison}
    \centering
\begin{tabular}{|c|cc|c|c|c|c|}
\hline
                               & \multicolumn{2}{c|}{This work}             & Mirfarshbafan~\cite{ref:beaches} &  Chundi~\cite{ref:hw4}      &  Liu~\cite{ref:liu}     &   Guo~\cite{ref:hw7}     \\ \hline\hline
Consider low-resolution ADC?    & \multicolumn{2}{c|}{Yes}            &  No                                    &  No       &  No      &  No       \\ 
Require prior knowledge of SNR? & \multicolumn{1}{c}{ No }     & Yes    & Yes                                   &  No       &  No      &  No       \\ 
$M$                            & \multicolumn{2}{c|}{128}          & 128                                 & 64     & 128   & -      \\ 
Clock frequency                & \multicolumn{2}{c|}{314}          & 303                                 & 125    & 217   & 62.5   \\ 
LUTs                           & \multicolumn{1}{c}{11664} & 7324 & 6391                                & 33829  & 24130 & 862261 \\ 
Flipflops                      & \multicolumn{1}{c}{17458} & 7735 & 7015                                & 34735  & 38464 & -      \\ 
DSP units                      & \multicolumn{1}{c}{27}    & 24   & 32                                  & 58     & 432   & 31     \\ 
Latency (clock cycle)          & \multicolumn{1}{c}{1322}  & 840  & 972                                 & 196250 & -     & -      \\ 
Latency ($\mu$s)               & \multicolumn{1}{c}{4.2}   & 2.7  & 3.2                                 & 1570   & -     & -      \\ \hline
\end{tabular}
\end{table*}
\section{Conclusion}\label{sec:conclusion}
In this paper, we proposed a low-complexity beamspace channel denoising algorithm for mmWave massive MIMO systems with low-resolution ADCs. By leveraging the inherent sparsity of mmWave channels in the beamspace domain, the denoising problem was formulated as a Bayesian binary hypothesis testing under a Bernoulli-complex Gaussian framework. 
In particular, thermal noise and quantization noise were jointly modeled as a composite noise, enabling a unified statistical characterization of distortion induced by low-resolution ADCs. Based on this formulation, a closed-form decision rule was derived to distinguish signal-dominant and noise-dominant components, followed by a hard-thresholding-based denoising procedure.
The proposed algorithm avoids computationally intensive operations such as matrix inversion, iterative optimization, and parameter searching, and achieves near-linear computational complexity with respect to the number of antennas. This property makes the proposed method particularly suitable for large-scale mmWave systems where both latency and computational efficiency are critical. 
Furthermore, a hardware-efficient VLSI architecture was developed and implemented on an FPGA platform, demonstrating significantly lower latency and reduced hardware resource utilization compared to existing approaches.
Simulation and implementation results confirmed that the proposed method achieves competitive performance compared to existing approaches while significantly reducing computational complexity and hardware cost compared to conventional hardware implementations. These results highlight the effectiveness of the proposed framework as a practical solution for beamspace channel denoising in mmWave massive MIMO systems with low-resolution ADCs.

\appendices
\section{Derivation of Hypothesis Testing Threshold}\label{app:testderivation}
In this appendix, we derive the Bayesian hypothesis testing threshold in \eqref{eqn:testing_orig}.
First, the left-hand side of \eqref{eqn:testing_orig} can be organized as
% \begin{equation}
% \begin{aligned}
\begin{align}
    & \dfrac{\dfrac{1}{\pi(\alpha^2P_{\bar{h}}/p(\mathcal{H}_1)+D_0)} \exp\left(-\dfrac{|\bar{h}^\prime_m|^2}{\alpha^2P_{\bar{h}}/p(\mathcal{H}_1)+D_0}\right)}
    {\dfrac{1}{\pi D_0}\exp\left(-\dfrac{|\bar{h}_m^\prime|^2}{D_0}\right)} \nonumber \\
    & = \frac{D_0}{\alpha^2P_{\bar{h}}/p(\mathcal{H}_1)+D_0} \nonumber \\
    & \quad \times \exp \left(
    -\dfrac{|\bar{h}^\prime_m|^2}{\alpha^2P_{\bar{h}}/p(\mathcal{H}_1)+D_0
    }
    + \frac{|\bar{h}_m^\prime|^2}{D_0}
    \right).
\end{align}
% \end{aligned}
% \end{equation}
By substituting the left-hand side of \eqref{eqn:testing_orig}, it can be rewritten as
\begin{equation}
    \begin{aligned}
    & \frac{D_0}{\alpha^2P_{\bar{h}}/p(\mathcal{H}_1)+D_0} \\
    & \quad\quad \times \exp \left(
    -\dfrac{|\bar{h}^\prime_m|^2}{\alpha^2P_{\bar{h}}/p(\mathcal{H}_1)+D_0
    }
    + \frac{|\bar{h}_m^\prime|^2}{D_0}
    \right) \\
    & \quad\quad \mathop{\gtrless}_{\mathcal{H}_0}^{\mathcal{H}_1}
    \frac{p(\mathcal{H}_0)}{p(\mathcal{H}_1)}C = \frac{1-p(\mathcal{H}_1)}{p(\mathcal{H}_1)}C.
    \end{aligned}
\end{equation}
Thus, we obtain
\begin{equation}
    \begin{aligned}
    & \exp \left(
    -\dfrac{|\bar{h}^\prime_m|^2}{\alpha^2P_{\bar{h}}/p(\mathcal{H}_1)+D_0
    }
    + \frac{|\bar{h}_m^\prime|^2}{D_0}
    \right) \\
    & \quad\quad  \mathop{\gtrless}_{\mathcal{H}_0}^{\mathcal{H}_1}
    \left(\frac{\alpha^2P_{\bar{h}}/p(\mathcal{H}_1)+D_0}{D_0} \right)\frac{1-p(\mathcal{H}_1)}{p(\mathcal{H}_1)}C.
    \end{aligned}
\end{equation}
By applying the logarithm to both sides, it yields
% \begin{equation}
%     \begin{aligned}
\begin{align}
        & -\dfrac{|\bar{h}^\prime_m|^2}{\alpha^2P_{\bar{h}}/p(\mathcal{H}_1)+D_0} + \frac{|\bar{h}_m^\prime|^2}{D_0} \nonumber \\
        & = |\bar{h}^\prime_m|^2 \left( -\dfrac{1}{\alpha^2P_{\bar{h}}/p(\mathcal{H}_1)+D_0} + \frac{1}{D_0} \right) \nonumber \\
        & = |\bar{h}^\prime_m|^2 \left( \dfrac{-D_0+\alpha^2P_{\bar{h}}/p(\mathcal{H}_1)+D_0}{(\alpha^2P_{\bar{h}}/p(\mathcal{H}_1)+D_0)D_0} \right) \nonumber \\
        & = |\bar{h}^\prime_m|^2 \left( \dfrac{\alpha^2P_{\bar{h}}/p(\mathcal{H}_1)}{(\alpha^2P_{\bar{h}}/p(\mathcal{H}_1)+D_0)D_0} \right) \nonumber \\
        & \mathop{\gtrless}_{\mathcal{H}_0}^{\mathcal{H}_1}
        \ln
        \left(\frac{\alpha^2P_{\bar{h}}/p(\mathcal{H}_1)+D_0}{D_0}\cdot \frac{1-p(\mathcal{H}_1)}{p(\mathcal{H}_1)}C \right). 
\end{align}
%     \end{aligned}
% \end{equation}
Subsequently, we can rewrite the testing equation in terms of $|\bar{h}_m^\prime|^2$ as
\begin{equation}
\begin{aligned}
    |\bar{h}_m^\prime|^2 & \mathop{\gtrless}_{\mathcal{H}_0}^{\mathcal{H}_1} 
    \left( \dfrac{(\alpha^2P_{\bar{h}}/p(\mathcal{H}_1)+D_0)D_0}{\alpha^2P_{\bar{h}}/p(\mathcal{H}_1)} \right) \\
    &\quad \times \ln \left(\frac{\alpha^2P_{\bar{h}}/p(\mathcal{H}_1)+D_0}{D_0}\cdot \frac{1-p(\mathcal{H}_1)}{p(\mathcal{H}_1)}C \right) \\
    & = D_0 \left( 1+\dfrac{D_0}{\alpha^2P_{\bar{h}}/p(\mathcal{H}_1)} \right) \\
    &\quad \times \ln \left(\left(1+\frac{\alpha^2P_{\bar{h}}/p(\mathcal{H}_1)}{D_0}\right) \frac{1-p(\mathcal{H}_1)}{p(\mathcal{H}_1)}C \right) \\
    & = D_0 \left( 1 + \frac{p(\mathcal{H}_1)}{\mathrm{SDNR}} \right) \ln \left(
    \left( 1+\frac{\mathrm{SDNR}}{p(\mathcal{H}_1)}\right)\frac{1-p(\mathcal{H}_1)}{p(\mathcal{H}_1)}C 
    \right),
\end{aligned}
\end{equation}
since $\mathrm{SDNR} = \alpha^2P_{\bar{h}}/D_0$.

\section{Proof of Lemma~\ref{lem:truncatedmeanbias}}\label{app:prooflem}
By the definition of conditional expectation,
\begin{equation}
\mathbb{E}[X \mid X \le k\lambda]
=
\frac{\int_0^{k\lambda} x f_X(x)\,dx}{\Pr(X \le k\lambda)}.
\end{equation}
First, the denominator is given by
\begin{equation}
\Pr(X \le k\lambda)
=
\int_0^{k\lambda} \frac{1}{\lambda} e^{-x/\lambda} dx = 1 - e^{-k}.
\end{equation}
Next, the numerator is
\begin{equation}
    \int_0^{k\lambda} x f_X(x)\,dx
    =
    \int_0^{k\lambda} x \frac{1}{\lambda} e^{-x/\lambda} dx 
    = \lambda \int_0^k u e^{-u} du.
\end{equation}
Using integration by parts, we have
\begin{equation}
\int_0^k u e^{-u} du
=
\left[-(u+1)e^{-u}\right]_0^k
=
1 - (1+k)e^{-k}.
\end{equation}
Therefore, the numerator becomes $\lambda \bigl(1 - (1+k)e^{-k}\bigr)$.
Combining the numerator and denominator, we obtain
\begin{equation}
\mathbb{E}[X \mid X \le k\lambda]
=
\frac{\lambda \bigl(1 - (1+k)e^{-k}\bigr)}{1 - e^{-k}}
=
\lambda \kappa(k) = \mathbb{E}[X]\kappa(k),
\end{equation}
which completes the proof.
\qed

\appendices
% \section{Proof of the First Zonklar Equation}
% Appendix one text goes here.

% % you can choose not to have a title for an appendix
% % if you want by leaving the argument blank
% \section{}
% Appendix two text goes here.

% % use section* for acknowledgment
% \section*{Acknowledgment}

% The authors would like to thank...

% Can use something like this to put references on a page
% by themselves when using endfloat and the captionsoff option.
\ifCLASSOPTIONcaptionsoff
  \newpage
\fi

% trigger a \newpage just before the given reference
% number - used to balance the columns on the last page
% adjust value as needed - may need to be readjusted if
% the document is modified later
%\IEEEtriggeratref{8}
% The "triggered" command can be changed if desired:
%\IEEEtriggercmd{\enlargethispage{-5in}}

% references section

% can use a bibliography generated by BibTeX as a .bbl file
% BibTeX documentation can be easily obtained at:
% http://mirror.ctan.org/biblio/bibtex/contrib/doc/
% The IEEEtran BibTeX style support page is at:
% http://www.michaelshell.org/tex/ieeetran/bibtex/
\bibliographystyle{IEEEtran}
% argument is your BibTeX string definitions and bibliography database(s)
\bibliography{bibtex/bib/IEEEabrv, bibtex/bib/paper}
\end{document}